\definecolor{green}{RGB}{0, 140, 50}
\providecommand{\tabularnewline}{\\}
\theoremstyle{remark}
\newtheorem{rem}{\protect\remarkname}
\theoremstyle{plain}
\theoremstyle{definition}
\newtheorem{defn}{\protect\definitionname}
\theoremstyle{definition}
 \newtheorem{example}{\protect\examplename}
\theoremstyle{remark}
\theoremstyle{plain}
\newtheorem{prop}{\protect\propositionname}
\theoremstyle{plain}
\newtheorem{thm}{\protect\theoremname}
\theoremstyle{plain}
\newtheorem{lem}{\protect\lemmaname}
\theoremstyle{remark}
\providecommand{\claimname}{Claim}
\providecommand{\definitionname}{Definition}
\providecommand{\examplename}{Example}
\providecommand{\factname}{Fact}
\providecommand{\lemmaname}{Lemma}
\providecommand{\propositionname}{Proposition}
\providecommand{\remarkname}{Remark}
\providecommand{\theoremname}{Theorem}
\newcommand{\shortv}[1]{}
\newcommand{\fullv}[1]{#1}
\newcommand{\commentout}[1]{}
\begin{document}
\shortv{\setcounter{page}{0}}
\title{The Benefits of Coarse Preferences}

\shortv{\author{Submission 143}}
\fullv{\author{Joseph Y. Halpern\thanks{
Computer Science Department, Cornell University. email:
\protect\href{mailto:halpern@cs.cornell.edu }{halpern@cs.cornell.edu
}.} 
\and Yuval Heller\thanks{Department of Economics, Bar-Ilan
University and University of California San Diego. email:
\protect\href{mailto:yuval.heller@biu.ac.il}{yuval.heller@biu.ac.il}. } 
\and Eyal Winter\thanks{Management School, Lancaster University, and
Department of Economics, The Hebrew University. email:
\protect\href{mailto:mseyal@mscc.huji.ac.il}{mseyal@mscc.huji.ac.il}. } 
\thanks{
{We thank Nick Netzer, the anonymous reviewers, and the editor for various helpful comments.} 
YH gratefully acknowledges the financial support of the European Research
Council (\#677057), the Israeli Science Foundation (\#2443/19), and
the Binational Israel-US Science Foundation (\#2020022).
{
JH was supported in party by 
NSF grants IIS-178108 and IIS-1703846 and MURI grant
W911NF-19-1-0217.}
  }}
  }

\fullv{  \maketitle}

\begin{abstract}
We study the strategic advantages of coarsening one's utility
      by clustering payoffs together (i.e., 
        classifying them the same way). Our solution
        concept, \emph{coarse-utility equilibrium (CUE)} requires that (1) each
        player maximizes her coarse utility, given the opponent's
        strategy, and (2) the classifications form best replies to
        one another. We characterize CUEs in various games. In
        particular, we show that there is a qualitative difference
        between CUEs 
       in which only one of the players clusters 
       payoffs and those in which all players cluster their payoffs, and that, in the latter type of CUE, players treat 
       other players better than they do in Nash
        equilibria in games with monotone
                externalities. 
\end{abstract}
\fullv{\noindent \textbf{Keywords}: 
Categorization, language, indirect evolutionary approach, monotone
externalities, strategic complements, strategic
substitutes. \ \ \ \textbf{JEL codes}: C73, 
D83 
}


\shortv{

  \begin{titlepage}

\maketitle

\end{titlepage}
}

\section{Introduction}

{A \emph{strategic commitment} 
is a situation in which a player promises to restrict her
  actions in a way that, if believed by co-players, yields strategic
  benefits. 
Commitment plays an important role in a variety of strategic
environments, including negotiations, voting, and international conflicts,
and it has been studied extensively in the game theory
literature  
since the seminal work of \cite{schelling1980estrategy}.}

{
While commitment always involves certain self-imposed restrictions,
it may appear in different forms. A player can commit to taking
an action or commit or to avoid taking an action.
Such commitments operate directly on the set of
strategies available to players, and the device that makes them
credible typically involves a pre-play action that makes a
deviation from the commitment extremely costly. 
However, some commitments can be indirect and take the form of making
certain changes to preferences or certain beliefs observable, or even
making moral 
sentiments or behavioral biases observable
(see, e.g.,} 
\citealp{GuethYaari1992Explaining,DekelElyEtAl2007Evolution,heifetz2007maximize,
  winter2017mental, heller2020biased}). 
  { For such commitments, the device that guarantees
   credibility is very often cultural and hence quite often less than
   perfect. Social norms often play a role in sustaining their
   credibility.}   
 
{
In this paper, we highlight one such form of commitment and investigate
  its implications on equilibrium behavior in various strategic
  environments. Specifically, we are interested in commitments made by
coarsening preferences. By coarsening,
players form categories that are potentially less refined than their
original (material) preferences and regard any two utility levels
belonging to the same category as identical. } 

{
Our motivation in addressing the role of coarsening in games is
twofold: First, we believe that coarsening does take place in many
games, as it makes the play simpler and makes decisions easier to
explain both to oneself and to others.\footnote{{Some papers
  study categorization and 
    coarsening that results from making optimal decisions under complexity costs
or that arises from an upper bound on the number of categories 
(see, e.g. \citealp{netzer2009evolution, halpern2015algorithmic, robson2023adaptive}). By contrast, 
we abstract away from complexity costs for two reasons: First, we
are interested in coarsenings that are not driven by complexity
considerations but by
the benefits of commitment. For example. at a car dealership a buyer
may attempt to commit to treating the price of \$20,000 for his desired car
as being just as bad for him as \$22,000 and treating both prices as
strictly worse 
than \$16,000, not because it is too complex for him to distinguish
between the first two prices, but because he wishes to induce an offer
around the third price. Second, even without the cost of complexity
the model is quite rich. Adding complexity costs to the model  is
likely to make it intractable unless strong assumptions are made.}} 
This benefit of coarsening also helps make it credible as a
commitment device.  
Second, we view coarsening as an attractive example for the idea
that in real life language and actions are intimately interwoven, and 
cannot be easily separated. This idea was first proposed by
\citeauthor{wittgenstein2010philosophical} (1953, 2003)  
who use the term ``Language Game'' to refer to this interweaving of language and action and to the fact that words and sentences receive their meaning only through the context and actions in which they are spoken.}

{
In line with Wittgenstein, we view commitment
  in games as a phase of pre-play communication where language functions
in a dual role: 
\begin{enumerate}
\item as a tool that assists players in reasoning about the game and
  making strategic decisions, and 
  \item as a device by which players exchange signals regarding
            preferences, beliefs, and actions. 
\end{enumerate}  
In our specific context of coarsening preferences, language can take
the form of adjectives representing one's subjective attitude towards
game outcomes, and clustering them into terms such as
``unacceptable'', ``fair'', and ``very generous''.   A player who claims
that she needs to attain a minimal level of payoff from the game in
order to achieve a certain goal (like paying back a mortgage) is also
describing an implicit coarsening of preferences. } 

{
For commitments to become credible through pre-play deliberation it
is not necessary that players will believe the literal statements made
by their counterparts. The process of social learning may support
situations in which a player says 
statement $p$ regarding his clustering and both
players understand that he means $q$.
We note here that we also
consider environments in which  
the observation of the opponent's coarsening is not perfect but probabilistic.}

{
To study the role of preference coarsening in games we introduce an
equilibrium concept that we call coarse utility equilibrium (CUE) for
strategic games, where the players' choice of classification is
derived endogenously, as part of the equilibrium. Specifically, we
consider a two-stage game where in the first stage
players choose their coarsening commitment, and in the second stage,
after observing the coarsening of their counterpart, they play the
underlying game.  
Roughly speaking, our solution concept can be viewed as a subgame-perfect equilibrium of this two-stage game. 
This framework allows us to study not only the equilibria that emerge
in the game but also the optimal coarsening strategies that players
use as commitments.   
 }

We describe {now in more detail} the nature of this solution concept:
Rather than considering the coarsening of arbitrary sets of outcomes,
we consider coarsening only the set of payoffs
(utility levels) of the players. Thus, coarsening results in a player bundling
payoffs into equivalence classes, 
and treating different payoffs that belong to the same class as if they
were identical. One way of {interpreting}  this process is that all
outcomes in an equivalence class are framed in the same way, and the
player\textquoteright s utility {(and hence actions)} depends only on the framing.
This process is quite natural in many games, {and captures the interweaving between language and actions that we discussed earlier.} Since utilities are real numbers, by coarsening
utilities, we can require that if two utilities $u_1$ and $u_2$ are 
in the same category, then so are all utilities in the interval
$[u_1,u_2]$.  This added structure on coarsening plays a key role in
our results. 

Our equilibrium conditions impose two best-response requirements,
one for each stage of the two-stage game. The first requires that,
based on the clustering used
in the first
stage of the game, players play a Nash equilibrium in the (strategic-form)
game of the second stage
(where players are indifferent between any two outcome 
that belong to the same cluster). The second
requires that clusterings are
optimal with respect to players\textquoteright{} unclustered (material)
payoffs, that is, that players best respond
to one another, so that their strategies form
an equilibrium with respect to the anticipated behavior 
in the second stage.
Hence, one can think of the first stage as a game where a
player\textquoteright s 
strategy is a commitment to a certain 
classification of outcomes (i.e., clustering), and its payoffs
are determined by the equilibrium outcome that arises from such commitments.

{
We interpret these two stages very differently. 
The second stage is viewed as a standard strategic game. In contrast,
the optimization in the first stage is {viewed as the
  outcome of a longer process in which culture and experience allow
  players to experiment} with different  
classification choices, {learn} how others respond to them, and {gradually} optimize their classification 
(as discussed at the end of Section \ref{sec-model}).
We do not provide a formal model for this process or the process by
which one player's choice of clustering becomes known to her
counterpart. Instead, we treat these processes as a ``black box''.
A full description of these processes would add little to the overall
insight, and would make the analysis more cumbersome than necessary.
However, we do analyze the robustness of our results in a setting where
the knowledge about the clustering choice of one's counterpart
is not perfect, but probabilistic.\footnote{ 
Specifically, in Appendix \ref{sec-partial}, {we
  extend our model to a setup with
    extend our model to a setting with 
  partial observability in which each player observes her opponent's
    clustering with probability $p<1$. We show that all of our results
    hold for a sufficiently high probability  $p<1$ of observing a
    deviation of the opponent to a different clustering. By contrast,
    if the observation probability is close to zero, then clustering
        can induce only Nash equilibria. }}}  

As an illustration, consider the prisoner's dilemma game described
  in Table \ref{tab:Prisoner-Dilemma}. The fact that  
defection is the dominant
action implies that mutual defection is the unique Nash
equilibrium. Consider a pair of players who are both committed to
bundling together all payoffs above 10.   
These players can be viewed as \emph{satisficing}, in the sense of
\citet{simon1955behavioral}, with an
aspiration level of 10. Another interpretation is that this coarse utility might be a commitment to a rounding heuristic: when a player has a two-digit payoff, he pays attention only to the leading digit. Observe that mutual cooperation is an equilibrium given these coarse utilities. 
Further observe that
a player cannot gain a higher material payoff by deviating to a different clustering because the fact that her opponent either obtains a payoff of 
at least 10 or plays
her material best reply limits the deviator's material payoff to being at
most 10. This implies that the mutual cooperation supported by
satisficing is a  
coarse-utility equilibrium.\footnote{The 
meta-analysis of \cite{mengel2018risk} shows that
there is a substantial  rate of cooperation ($37\%$) in experiments
of the (one-shot) prisoner's dilemma.}

\begin{table}
\caption{\label{tab:Prisoner-Dilemma}Matrix Payoffs of a Prisoner's Dilemma
Game (Illustrative Example)}

\centering{}%
\begin{tabular}{|c|c|c|}
\hline 
 & \textcolor{red}{\emph{c}} & \textcolor{red}{\emph{d}}\tabularnewline
\hline 
\textcolor{blue}{\emph{~~c~~}} & \emph{\Large{}$_{\underset{\,}{{\color{blue}10}}}\,^{\overset{\,}{{\color{red}10}}}$} & \emph{\Large{}${\color{blue}_{\underset{\,}{0}}}\,\,^{{\color{red}\overset{\,}{11}}}$}\tabularnewline
\hline 
\textcolor{blue}{\emph{d}} & \emph{\Large{}$_{\underset{\,}{{\color{blue}11}}}\,\,^{\overset{\,}{{\color{red}0}}}$} & \emph{\Large{}$_{{\color{blue}1}}\,\,\,^{{\color{red}1}}$}\tabularnewline
\hline 
\end{tabular}
\end{table}

We view the formation of players' classification as a 
long-term process of
cultural evolution in which players preferences dynamically change in a way that increases their social fitness, where this fitness is represented by 
the original payoffs of the game. 
We do not provide an explicit model of this process.
Instead, we treat it as a ``black box'' that
delivers an equilibrium outcome with each player's classification best
responding to that of his/her counterpart.  

In real life, such classifications
are often determined by a player's language, which both helps the
player in reasoning about the situation and in providing signals about
other players' coarsenings.  
Hence, we can interpret players'  classification as a 
dictionary, that is, a mapping from outcomes to
words.
In addition to helping players reason about the underlying situation, language also provides them with
a tool for making clustering commitments to
the other players.
This induced language can be very limited, leading to
a coarse classification with only two components (e.g., high/low or
acceptable/unacceptable, as in our earlier example),  
or can result in a much more refined classification, using a larger vocabulary
(e.g., outstanding, generous, fair, satisfactory, disappointing, insulting,
or outrageous). 
We  view the first stage of the game as one where players determine how to cluster and communicate their coarsened utility function to the other players.

In defining the equilibrium conditions for players\textquoteright{}
classification strategies, we consider three
variants of CUE that differ
in the description of circumstances under which 
must best respond to a classification strategy. The first variant takes a best response by player $i$ at stage 1 to be one where there is at least one equilibrium in the second stage where player $i$ does not gain by deviating.  The second variant requires that player $i$ does not gain by deviating in all equilibria of the stage 2 game.
Finally, the third variant requires that 
 player $i$ does better, not in all equilibria of the stage 2 game,
 but only in \emph{plausible} equilibria, where an equilibrium is
   plausible if, should one of the players deviate to a new classification, 
   the players can reach the new equilibrium by a
   sequence of changes in their play such that, at each step, the
player who changes her behavior increases her coarse utility. 
We believe that the third variant is the most reasonable, and use it
for most of our results (we discuss its 
evolutionary interpretation at the end of Section \ref{sec-model}). The weaker concept is
less informative, while the stronger one does not always exist.

Our analysis involves two domains of 
two-player games.\footnote{To simplify the notation, we  focus on
two-player games. All the definitions, and many of the
  results, can be extended to $n$-player games in a straightforward
  way.} The first is finite normal-form  games (where mixed
strategies are allowed) and the second is continuum games in which the
set of strategies is an interval. We start with a few results that  
provide some preliminary insights about our proposed solution concepts. We first show that weak CUE gives rise to a folk theorem;  the set of equilibria includes all individually rational outcomes. This result also motivates our interest in the two other, more restrictive, variants of CUE.  We next consider the relationship between strong CUE and Stackelberg leadership, showing that every Pareto efficient outcome of a game that pays each player at least her Stackelberg payoff is a strong CUE. This latter condition can be interpreted as a requirement that no player can regret not doing something else, under the assumption that his opponents would have best responded to what he would have done.

In Section \ref{subsec:constant-sum}, we study two extreme classes of
games, zero-sum games and common-interest games. Our interest in these
two classes of game is motivated by the fact that the role of
commitment is limited in games belonging to these classes. Roughly, if
players' interests are in complete conflict, and one of them commits to  
treating two outcomes as if they were identical, then it must be the
case that such a commitment makes him better off.   But this means
that it must make the player who reacts to this commitment worse off,
and hence the latter should ignore the commitment and act as if it was
never made. Similarly, if players' interests are fully aligned,
commitment is superfluous; the players can easily coordinate to arrive
at the outcome that maximizes their joint utility. Our results
coincide with these intuitions. We show that in a zero-sum game, every
Nash equilibrium is a strong CUE and every weak CUE yields the minimax
value of the game; in common-interest games, the set of CUE outcomes
coincides with the set of Nash equilibria, and every strong CUE
outcome is a Pareto-efficient Nash equilibrium. 

Section \ref{sec:interval-games} provides our main results for interval games. Our analysis here concerns games with monotone externalities (involving both negative and positive externalities), a property shared by many economic applications. 
Our first result focuses on CUE in which players do not 
best reply to their unclustered utilities (and hence differ from Nash equilibria).  Theorem \ref{Thm:neccesary-conditions-monotone} shows that in such games a player's CUE strategy treats her opponent better than her best reply to her unclustered utility would do. 
The key observation is that any opponent's reaction to a small deviation of a player must be towards the opponent's best reply to her  unclustered utility.
This result is surprising, as it applies to the cases of both strategic complements and strategic substitutes, 
while the existing related literature yields  similar results only
  in games with strategic complements (as discussed in Remark
  \ref{rem-theorem-1}). Conceptually, this results implies that  
  bilateral clustering commitments aren't used as threats, but as 
 positive commitments, that is, a promise to treat co-players better
 than when such a commitment is not made. 

We also show that when both players deviate from a CUE towards a profile in which each player's externalities reduce the utility of his co-player, then not only can such a profile not Pareto dominate the CUE, but also no single player can be made better off assuming his co-player further deviates by best responding with his unclustered preferences. This feature of CUEs can be viewed as a stability property, highlighting the fact that in CUEs players utilize their externalities in a cohesive and welfare-enhancing manner. 

Next we add the assumption of the game having strategic
complementarity, and we show that the two properties we described
above,  
beneficial commitment and stability, become both necessary and
sufficient for all CUE in games with strategic complementarity, and
hence fully characterize CUEs for these games. 

We then explore CUE outcomes in the class of games with strategic
substitution. Interestingly, in these games there is a sharp
  distinction between CUEs 
  in which only one of the players clusters her utility, and CUEs 
  in which both players cluster their utilities. 
Specifically, in CUEs with only one player (say, Alice) playing the
best reply to her unclustered utility, the other player's (Bob's) CUE strategy will
treat Alice less favorably than would be the case under Bob's unclustered best
reply.  
Hence, in CUEs in which only one of the players clusters her utility, the clustering-induced commitment is best viewed as a threat rather than a favorable promise (while bilateral CUE commitments are still interpreted as favorable promises). 

An important finding of our analysis concerns the comparison of CUE outcomes between 
games with strategic complementarity and games with strategic substitution. Roughly, the first property involves common interest, while the second involves  opposing interests.  Under complementarity, in CUE outcomes, players who deviate from rationality make their co-player better off, but under substitution, players who (unilaterally) deviate from rationality  make their co-player worse off. This structure of deviation from rationality considered by our solution concept is interesting because it contrasts with other behavioral solution concepts (in particular, those that are based on the idea of altruism or spitefulness), where the deviation from rationality on the cooperation-competition scale is unidirectional. 

This feature of our solution concept also connects to some experimental findings showing that when players perceive the strategic environment as competitive,
they compete more vigorously, and when they perceive it to be cooperative, they cooperate more willingly, even when rationality prescribes the same behavior (see, e.g, \citealt{goerg2010prevalence}). It also
connects to the so-called ``Social Salience Hypothesis'' regarding the neurotransmitter Oxytocin that regulates social behavior (see, e.g., \citealt{shamay2016social}).
Early studies of this hormone raised the hypothesis that its main evolutionary role is to enhance cooperation. But as evidence started piling up showing that in certain strategic environments Oxytocin makes people more competitive/aggressive, a new hypothesis has emerged claiming that Oxytocin boosts competition in competitive environments and enhances cooperation in cooperative environments. Our finding might contribute to this hypothesis by suggesting the possibility that the evolutionary forces that sustained this dual role of the hormone benefited from the strategic commitment that such behavior generates. 


The rest of the paper is structured as follows.
In the remainder of this section, we briefly survey the related
literature. Section 
\ref{sec-model} 
presents our model and solution concept. In Section \ref{sec-results},
we present our general results. Section \ref{sec:interval-games}
characterizes CUE in Interval games with monotone externalities. 
The appendix contains formal proofs and an extension of our
  model to deal with partial observability of the opponent's clustering.

\subsection{Related Literature}\label{subsec-relate}
Our paper is related and inspired by three strands of literature. The
first strand includes papers that study the impact of
categorization. A few papers have studied categorization in
single-agent decision problems. \citet{Mull02} studies an agent who is
constrained to choosing a category, rather than a more refined choice,
and examines the kinds of biases that arise as a result of categorical
thinking. \citet{mengel2012evolution} compares  the evolutionary
fitness of different categorization of decision
situations. \citet{mohlin2014optimal}  
studies optimal categorizations that minimize the prediction error. 
\citet{horan2019coarseness} characterize conditions under
which having coarse utility can benefit a decision maker who perceives the values of alternatives with noise. An
important difference between our paper and those cited above is that
we study categorization in multi-player games; the strategic
implications of players' categorizations plays an important role
in our solution concept. 

Other papers have considered categorization in multi-player strategic
interactions. 
\citet{jehiel2005analogy} and \citet{jehiel2008revisiting} consider
multi-stage games,  
where each player $i$ bundles nodes in the game tree in which other
players move into what they call \emph{analogy classes}.  Player 
$i$ assumes that player $j$ makes the same move at all nodes in the
same analogy class. 
\citet{azrieli2009categorizing} studies games with many players and
shows that 
categorizing the opponents into a few groups can lead to efficient
outcomes.  \citet{steiner2015price} study the price distortions that
are induced when traders apply coarse reasoning in their
forecasts. \citet{daskalova2020categorization} examine how attempts to
coordinate predictions with others affects incentives for coarse 
categorization in different environments. A key difference between
these models and ours is that in the models in the other papers that
we mentioned, the categorization is determined
exogenously, whereas in our model, the categorization are endogenously 
determined as part of the solution concept.

{A few related papers study situations in which players face different games, and have to categorize the games and decide how to play in each class.} \citet{mengel2012learning} {studies how players jointly learn how to bundle different games and how to adjust their behavior in each class. This learning model is further developed, and experimentally tested, in} \citet{licalzi2022feature}. 
\citet{heller2016rule}
{
studied a related solution concept in which agents who interact in
various games endogenously bundle different games together, where this bundling has a commitment advantage.} \citet{gauer2020cognitive} 
{show that player involved in a class of conflict games may prefer having coarse information about the opponent's utility even when the (positive)  cost of accurate information is arbitrarily small. One key point in which our model differs from these papers is that in our setup player face a single game, and they bundle together some intervals of their payoff function (rather than bundling together different games).}

The second strand of literature includes papers that study the stability of endogenous preferences 
 (the indirect evolutionary approach; see, e.g.,
\citealp{GuethYaari1992Explaining,DekelElyEtAl2007Evolution,heifetz2007maximize,friedman2009equilibrium,Herold_Kuzmics_2009,eswaran2014economic,winter2017mental}), 
and those in which a player can choose a delegate (with different 
incentives) to play on his behalf (see, e.g.,
\citealp{fershtman1997unobserved,
  Dufwenberg_Guth_1999_EvoandDel,fershtman2001strategic}). Similar to
our model, these papers allow the player's subjective payoffs to
differ from the material payoffs, and assume that a deviation to new
subjective payoffs induces the players to move to a new equilibrium.  

Our paper differs from the papers mentioned above in that we substantially
restrict how much the subjective 
(clustered) utility is allowed to differ from
the material (unclustered) utility. The only 
difference we  allow is that of clustering together
intervals of outcomes. Intuitively, these are outcomes that
the
agent commits to regarding as identical (intuitively, ones that he
would describe the same way).  By contrast, the existing literature 
is much more permissive; it typically allows an agent to have an arbitrary
subjective utility, including one in which she prefers a bad outcome
to a better outcome. We think that our restriction is reasonable
in many setups. For example, students or teachers may cluster grades  that are
given on the 0-100 scale into coarser categories, viewing 
grades within the same category as essentially
identical (see the related analysis of the optimal coaresning of grades by
\citealp{dubey2010grading}). 
By contrast, it seems unlikely that students would strictly prefer obtaining a low grade to obtaining a higher grade.

The third strand of literature consists of work studying the role of
commitment in strategic situations. This topic has been extensively
investigated since the seminal work of \citet{schelling1980estrategy} 
(see, e.g.,
\citealp{bade2009bilateral,renou2009commitment,arieli2017sequential}). Our
contribution with respect to this literature involves introducing a new
commitment device, which seems plausible in many real-life situations:
clustering intervals of payoffs together. Our results show that such clustering can result in novel outcomes; for example, equilibria in which only one of the players clusters her payoffs are qualitatively different than those in which both players cluster their payoffs (the two classes of CUEs induce the opposite behaviors in
games with strategic substitutes, as shown in Theorems
\ref{Thm:neccesary-conditions-monotone} and \ref{thm-substitues}).

\section{Model }\label{sec-model}

\paragraph{Underlying Game}
Let $G=\left(S,\pi\right)$ be a two-player
normal-form game (which
we refer to as the \emph{underlying game}), where:
\begin{enumerate}
\item $S=S_{1}\times S_{2}$ is the set of strategy profiles, where each
$S_{i}$ is a convex and compact subset of a
Euclidean space that represents the set of strategies
of player $i\in\left\{ 1,2\right\} $; 
\item $\pi=\left(\pi_{1},\pi_{2}\right)$ is the profile of unclustered (material) utilities,
where each $\pi_{i}:S\rightarrow\mathbb{R}$ is a function assigning
each player $i\in\left\{ 1,2\right\} $ a payoff for each strategy
profile. 
\end{enumerate}
We use $i\in\left\{ 1,2\right\} $ as an index referring to one of the players. Let $-i$ denote the opponent of player $i$. We assume
each payoff function $\pi_{i}\left(s_{i},s_{-i}\right)$ is continuous
in all parameters and is weakly concave in the player's own strategy
($s_{i}$).
For each two strategies $s_{i},s_{i}'\in S_{i}$ and each $\alpha\in\left(0,1\right)$,
let $\left(1-\alpha\right)\cdot s_{i}+\alpha\cdot s'_{i}\in S_{i}$
denote the strategy that is a convex combination of $s_{i}$ and $s'_{i}$.
Strategy profile $s$ is \emph{interior }if $s_{i}\in\textrm{Int}\left(S_{i}\right)$
for each player $i$.

We are particularly interested in two classes of underlying games: interval
games and finite games. We say that the game is an \emph{interval
game} if each $S_{i}$ is a bounded interval in $\mathbb{R}$ (e.g.,
each player chooses a real number representing quantity, price, or
effort). We say that the game is \emph{a finite} \emph{game} if each
$S_{i}$ is a simplex over a finite set of pure actions (i.e., $S_{i}=\Delta\left(A_{i}\right)$,
where $A_{i}$ is finite), and each $\pi_{i}$ is a von Neumann--Morgenstern
payoff function (i.e., it is linear with respect to the mixing probabilities). 

With a slight abuse of notation we identify a pure action $a_{i}$
with the degenerate strategy that assigns probability one to the action
$a_{i}$.  Note that a two-action game (in which, $\left|A_{i}\right|=2$
for each player $i$) is both a finite game and an interval game (where
we identify each strategy $s_{i}$ with the probability it assigns
to the first pure action).

Let $BR_{i}:S_{-i}\rightarrow S_{i}$ denote the (unclustered) \emph{best-reply
correspondence}, and let $BRP_{i}:S_{-i}\rightarrow\mathbb{R}$ denote
the best-reply (unclustered) payoff; that is, 
\[
BR_{i}\left(s_{-i}\right)=argmax_{s_{i}\in S_{i}}\left(\pi_{i}\left(s_{i},s{}_{-i}\right)\right),\,\,\,BRP_{i}=\max{}_{s_{i}\in S_{i}}\left(\pi_{i}\left(s_{i},s{}_{-i}\right)\right).
\]
The continuity and weak concavity
of $\pi_{i}$ implies that $BR_{i}\left(s_{-i}\right)$ is a non-empty
closed convex set, {which in turn implies that the game has a Nash
equilibrium.}  If $\pi_i$ is stictly concave, then $BR_i(s_{-i})$
is a singleton 
(in which case $BR_{i}$ is a single-valued function).

In an interval game we say that strategy profile $s'$ is lower than
strategy $s$ (denoted by $s'<s$) if $s'_{i}<s_{i}$ for each player
$i$. 
Similarly, we write $s_{i}\leq BR_{i}\left(s_{-i}\right)$ (resp., $s_{i}\geq BR_{i}\left(s_{-i}\right)$)
if $s_{i}$ is weakly lower (resp., higher) than all elements of
the set $BR_{i}\left(s_{-i}\right)$.

\paragraph{Coarse-Utility Game}

We allow players to cluster together intervals of payoffs as
equivalent
outcomes. Formally, a \emph{clustering} is a weakly increasing function
$f_{i}:\mathbb{R}\rightarrow\mathbb{R}$. The clustering $f_{i}$
describes which intervals of payoff player $i$ clusters together;
i.e., which payoffs $x\neq y$ satisfy $f_{i}\left(x\right)=f_{i}\left(y\right),$where
this latter equality implies that $f_{i}$ clusters together all payoffs
in the interval between $x$ and $y$. 

Each clustering $f_{i}$ induces a clustered utility
$f_{i}\circ\pi_{i}:S\rightarrow\mathbb{R}$ for player $i$ that
coarsens her original (unclustered)
utility $\pi_{i}$. The coarse utility $f_{i}\circ\pi_{i}$ is similar
to the unclustered utility $\pi_{i}$, except that for some intervals
of payoffs, the player clusters together all payoffs within
the interval, and subjectively considers them as equivalent payoffs.
Observe that $\pi_{i}\left(y\right)=\pi_{i}\left(x\right)\Rightarrow\left(f_{i}\circ\pi_{i}\right)\left(y\right)=\left(f_{i}\circ\pi_{i}\right)\left(x\right),$
and $\pi_{i}\left(y\right)>\pi_{i}\left(x\right)\Rightarrow\left(f_{i}\circ\pi_{i}\right)\left(y\right)\geq\left(f_{i}\circ\pi_{i}\right)\left(x\right).$ 

Observe that the only aspects of the clustering that affect the player's
preferences are the intervals of payoffs that are clustered together.
That is, if $f_{i}$ and $g_{i}$ are two clusterings with the same
clustered intervals, i.e., $f_{i}\left(x\right)=f_{i}\left(y\right)\Leftrightarrow g_{i}\left(x\right)=g_{i}\left(y\right)$,
then they both induces the same clustered preferences: i.e., $f_{i}\left(\pi_{i}\left(s\right)\right)\geq f_{i}\left(\pi_{i}\left(s'\right)\right)$
$\Leftrightarrow$ $g_{i}\left(\pi_{i}\left(s\right)\right)\geq g_{i}\left(\pi_{i}\left(s'\right)\right)$.

A specific class of clusterings that we will frequently use in the
paper are those in which a player clusters together payoffs iff they
are within a given interval. It will be useful to introduce a notation
for this frequently-used class. For each interval $\left[a,b\right]\subseteq\mathbb{R}$,
let $f_{i}^{\left[a,b\right]}$ be a coarse utility that clusters
together payoffs in the interval $\left[a,b\right]$, and does not
cluster payoffs outside $\left[a,b\right]$, i.e.,

\[
f_{i}^{\left[a,b\right]}\left(x\right):=\begin{cases}
x & x\notin\left[a,b\right]\\
a & x\in\left[a,b\right].
\end{cases}
\]
In particular, $f_{i}^{\mathbb{R}}\equiv0$ is the coarsest clustering that clusters all the payoffs
together, $f_{i}^{\emptyset}\equiv Id_{i}$ is the identity clustering
that does not cluster any payoffs together, and
$f_{i}^{\geq c}:=f_{i}^{[c,\infty)}$ 
    clusters all payoffs above $c$ together (i.e., players are satisficing,
with an aspiration level of $c$).

Given an underlying game $G=\left(S,\pi\right)$ and a clustering profile
$f=\left(f_{1},f_{2}\right)$, let the \emph{coarse-utility} \emph{game}
$G_{f}=\left(S,f\circ\pi\right)$ be the game in which the utility
of each player $i$ is $f_{i}\circ\pi_{i}$ (rather than the unclustered
utility $\pi_{i}$). Let $NE\left(G_{f}\right)$ denote all Nash equilibria
of $G_{f}$. It is easy to see that every Nash equilibrium of the
underlying game $G$ is a Nash equilibrium of the coarse-utility game
$G_{f}$. Formally:

\begin{prop}\label{fact1:any-NE-is-NE-clustered}
  $NE\left(G\right)\subseteq NE\left(G_{f}\right)$ 
for each underlying game $G$ and each profile $f$. 
\end{prop}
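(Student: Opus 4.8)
The plan is to show that any Nash equilibrium $s^* = (s_1^*, s_2^*)$ of the underlying game $G$ remains a Nash equilibrium of the coarse-utility game $G_f$. The essential idea is that clustering is a \emph{weakly increasing} transformation of the payoffs, and composing utilities with a weakly increasing function can never strictly reverse a weak preference. Fix the profile $f = (f_1, f_2)$ and a player $i$. Since $s^*$ is a Nash equilibrium of $G$, we have $\pi_i(s_i^*, s_{-i}^*) \geq \pi_i(s_i, s_{-i}^*)$ for every alternative strategy $s_i \in S_i$. Because $f_i$ is weakly increasing, applying $f_i$ to both sides preserves the inequality, giving $f_i(\pi_i(s_i^*, s_{-i}^*)) \geq f_i(\pi_i(s_i, s_{-i}^*))$, which is exactly the statement that $s_i^*$ is a best reply to $s_{-i}^*$ under the clustered utility $f_i \circ \pi_i$.

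First I would fix an arbitrary $s^* \in NE(G)$ and an arbitrary player $i$, and write down the defining best-reply inequality for $s^*$ in $G$ over all deviations $s_i \in S_i$. Second, I would invoke the monotonicity of $f_i$ (the clustering is a weakly increasing function $f_i : \mathbb{R} \to \mathbb{R}$ by definition) to apply $f_i$ to each side of that inequality and conclude that $s_i^*$ maximizes $f_i \circ \pi_i(\cdot, s_{-i}^*)$. Third, since $i$ was arbitrary, both players are simultaneously best-replying under their clustered utilities, so $s^*$ satisfies the Nash condition in $G_f$, i.e., $s^* \in NE(G_f)$. As $s^*$ was an arbitrary element of $NE(G)$, the inclusion $NE(G) \subseteq NE(G_f)$ follows.

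There is no serious obstacle here; the result is essentially immediate from the monotonicity of $f_i$. The only point that warrants a moment's care is the treatment of deviations: in the finite-game case, the relevant deviations range over all mixed strategies $s_i \in \Delta(A_i)$, and one should confirm that the best-reply inequality in $G$ being stated over this full strategy set is enough to conclude the best-reply inequality in $G_f$ over the same set --- which it is, since the argument is pointwise in $s_i$ and uses nothing about linearity or the structure of $S_i$. Note in particular that I do \emph{not} claim the clustered utility $f_i \circ \pi_i$ is linear in the mixing probabilities (it generally is not), but this is irrelevant: the Nash condition only requires that no deviation be profitable, and that follows directly from the monotone transformation preserving the underlying inequality for each fixed deviation.

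I would keep the write-up to a few lines, stating the preservation-of-inequality step as the crux and remarking that it holds uniformly over all players and all candidate deviations. No earlier result is needed beyond the definition of a clustering as a weakly increasing function and the definitions of $NE(G)$ and $NE(G_f)$.
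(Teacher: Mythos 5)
Your proposal is correct and takes essentially the same approach as the paper: fix a Nash equilibrium, write the best-reply inequality under $\pi_i$, and apply the weakly increasing clustering $f_i$ to both sides to preserve it. The additional remark about mixed strategies and non-linearity of $f_i \circ \pi_i$ is a sensible clarification but does not change the argument.
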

\begin{proof}
If   $s\in NE\left(G\right)$ then for all strategies $s_i'$ for player
$i$, we have $\pi_{i}\left(s\right)\geq\pi_{i}\left(s'_{i},s_{-i}\right)$.  Since $f_i$ is
weakly increasing, it follows that 
$f_{i}\circ\pi_{i}\left(s\right)\geq
  f_{i}\circ\pi_{i}\left(s'_{i},s_{-i}\right)$.  Thus, $s\in
  NE\left(G_{f}\right)$, as desired.
\end{proof}

\paragraph{Weak and Strong Coarse-Utility Equilibrium (CUE)}

Our solution concept is a pair consisting of a coarse-utility profile
and a strategy profile
such that: (1) each strategy is a clustered
best reply to the opponent's strategies, given the player's clustering,
and (2) each clustering is a best reply to the opponent's clustering,
in the sense that deviating to a different clustering would lead to
an equilibrium of the new coarse-utility game induced by this deviation
in which the deviator is outperformed (relative to the deviator's
unclustered payoff in the original equilibrium). 

The fact that coarse-utility games typically admit multiple equilibria
means that there are several ways to
formalize the second condition.
We consider three variants of our
basic solution concept.  
In the first variant,
weak coarse-utility equilibrium, ``best reply'' is taken to mean that the deviator 
must 
be outperformed
in at least one equilibrium in the new coarse-utility game. Formally, 
\begin{defn}
\label{def-weak-CUE} A \emph{weak CUE} is a pair $\left(f,s\right)$, where $f$ is a clustering profile
and $s$ is a strategy profile satisfying: (1) $s\in NE\left(G_f\right)$,
and (2) for each player $i$ and each clustering $f'_{i}$ , there
exists an equilibrium $s'\in NE\left(G_{\left(f'_{i},f_{-i}\right)}\right)$
such that $\pi_{i}\left(s'\right)\leq\pi_{i}\left(s\right)$. 
\end{defn}

The following example shows that the notion of weak CUE is too
permissive in the sense that it allows unreasonable behavior after
a player changes her coarse utility.
\begin{example}[Implausible weak CUE]
\label{exm-weak-is-too-weak}Consider a symmetric Cournot game with
linear demand $G=\left(S,\pi\right)$: $S_{i}=\left[0,1\right]$ and
$\pi_{i}\left(s_{i},s_{-i}\right)=s_{i}\cdot\left(1-s_{i}-s_{-i}\right)$
for each player $i$ (where $s_{i}$ is interpreted as the quantity
chosen by firm $i$, the price of both goods is determined by the
linear inverse demand function $p=1-s_{i}-s_{-i}$, and the marginal
cost of each firm is normalized to be zero). Then $((f_{1}^{\mathbb{R}},f_{2}^{\mathbb{R}}),(0.5,0))$
is a weak CUE in which both players cluster all payoffs together,
player $1$ plays $0.5$ and gains the maximal feasible payoff of
0.25 and player $2$ plays $0$ and gets zero payoff. If player 2
deviates to another clustering, then player 1 (who is indifferent
between all strategies) ``floods'' the market with quantity 1, which
yields a non-positive payoff to both players. The reaction of player
1 to the new clustering of player 2 seems implausible in the sense
that her CUE quantity $0.5$ is a clustered best reply to all of
her opponent's strategy; she has no reason to increase her quantity
to 1.
\end{example}

{
The notion of weak CUE is equivalent to a subgame-perfect equilibrium 
of a two-stage game in which in the first stage each player chooses
a clustering for her second-stage self, and in the second stage each
second-stage self chooses an action. Example} \ref{exm-weak-is-too-weak} {suggests that including the player's off-the-equilibrium path behavior after each clustering profile as part of the player's strategy is too permissive in our setup. This is so because subgame-perfect equilibrium behavior allows a player with the coarsest utility to use an  
extreme punishment that minimizes the opponent's payoff in response to any change in the opponent's utility.}\footnote{{Another reason for not defining a player's strategy to include  behavior after off-the-equilibrium-path clustering profiles is that such a strategy would be very complicated, and there is no reason to believe that a learning process would induce a large set of off-the-equilibrium path pre-specified behaviors.} }

We next define a more restrictive equilibrium notion, strong CUE,
which requires that 
a deviator who chooses a different coarse utility
is outperformed in \emph{all} equilibria of the induced coarse-utility
game. Formally: 
\begin{defn}
\label{def-strong-CUE}A\emph{ strong CUE} is a pair $\left(f,s\right)$, where $f$ is a clustering
profile and $s$ is a strategy profile satisfying: (1) $s\in NE\left(G_{f}\right)$,
and (2) $\pi_{i}\left(s'\right)\leq\pi_{i}\left(s\right)$ for each
player $i$, each clustering $f'_{i}$, and each equilibrium $s'\in NE\left(G_{\left(f'_{i},f_{-i}\right)}\right)$.
\end{defn}

Next we show that any strong CUE must Pareto dominate every Nash equilibrium of the game. This suggests that the notion of strong CUE is too restrictive, because in many games, such
as the Battle of Sexes (see Table \ref{tab:BoS-Payoff-Matrix}), no
strategy profile Pareto dominates all Nash equilibria (and, thus, many
games do not admit strong CUE). 

\begin{prop}
\label{prop:strong-necce}Let $\left(f,s\right)$ be a strong CUE,
and let $s^{NE}$ be a Nash equilibrium. Then $\pi_{i}\left(s\right)\geq\pi_{i}\left(s^{NE}\right)$
for all players $i$.
\end{prop}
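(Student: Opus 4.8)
The plan is to exploit the fact that condition (2) of Definition \ref{def-strong-CUE} already quantifies over \emph{all} clusterings $f'_i$, including the trivial ``deviation'' $f'_i = f_i$ in which player $i$ keeps her equilibrium clustering. Fixing a player $i$, I would instantiate that condition at $f'_i = f_i$, so that the induced coarse-utility game is simply $G_{(f_i,f_{-i})} = G_f$. This reduces the claim to showing that $s^{NE}$ is among the equilibria $s'$ ranged over by the condition, i.e., that $s^{NE} \in NE(G_f)$.

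That membership is precisely what Proposition \ref{fact1:any-NE-is-NE-clustered} delivers: since $s^{NE}$ is a Nash equilibrium of the underlying game $G$, it is a Nash equilibrium of every coarse-utility game, and in particular $s^{NE} \in NE(G_f)$. Condition (2) then applies to $s' = s^{NE}$ and yields $\pi_i(s^{NE}) \leq \pi_i(s)$. Since $i$ was arbitrary, the inequality holds for both players, which is the desired conclusion.

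There is essentially no hard step here; the content of the proposition is almost entirely front-loaded into Proposition \ref{fact1:any-NE-is-NE-clustered}. The only subtlety worth flagging is to confirm that condition (2) really does permit the non-deviating choice $f'_i = f_i$ (it does, since the quantifier ranges over every clustering, with no exclusion of the current one). Alternatively, one could select any clustering $f'_i$ whatsoever and still conclude $s^{NE} \in NE(G_{(f'_i,f_{-i})})$ by Proposition \ref{fact1:any-NE-is-NE-clustered}, so the argument is robust to whether or not one insists on a genuine change of clustering.
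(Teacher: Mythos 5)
Your argument is correct and is essentially the paper's own proof: both rely on Proposition \ref{fact1:any-NE-is-NE-clustered} to place $s^{NE}$ in $NE\bigl(G_{(f'_i,f_{-i})}\bigr)$ for an arbitrary clustering $f'_i$ and then invoke condition (2) of Definition \ref{def-strong-CUE}. The only cosmetic difference is that the paper phrases it as a proof by contradiction while you argue directly.
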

\begin{rem}
  Observe that a strategy profile remains an equilibrium of a
    coarse-utility game if we coarsen the clustering of one of the
    players. Thus, if $s'\in
    NE\left(G_{\left(f'_{i},f_{-i}\right)}\right)$, then $s'\in
    NE\left(G_{\left(f^{\mathbb{R}}_{i},f_{-i}\right)}\right).$ This
    implies that we can focus in Definitions
  \ref{def-weak-CUE}--\ref{def-strong-CUE} on deviations to
    the coarsest clustering $f^{\mathbb{R}}_{i}$; that is, we can
   replace ``each clustering $f'_i$'' in Definitions
  \ref{def-weak-CUE}--\ref{def-strong-CUE}
  with ``the coarsest     clustering $f^{\mathbb{R}}_{i}$.'' 
\end{rem}
\begin{proof}
Assume to the contrary that $\pi_{i}\left(s\right)<\pi_{i}\left(s^{NE}\right)$.
Consider a deviation by player $i$ to an arbitrary clustering $f'_{i}$.
Proposition \ref{fact1:any-NE-is-NE-clustered} implies that $s^{NE}\in NE\left(G_{\left(f'_{i},f_{-i}\right)}\right)$,
which contradicts $\left(f,s\right)$ being a strong CUE. 
\end{proof}
\begin{table}
\begin{centering}
~%
\begin{tabular}{c|cc}
 & \textcolor{red}{\emph{$a_{2}$}} & \textcolor{red}{\emph{$b_{2}$}}\tabularnewline
\hline 
\textcolor{blue}{\emph{$a_{1}$}} & \textcolor{blue}{2},~~\textcolor{red}{1} & \textcolor{blue}{0},~~\textcolor{red}{0}\tabularnewline
\textcolor{blue}{\emph{$b_{1}$}} & \textcolor{blue}{0},~~\textcolor{red}{0} & \textcolor{blue}{1},~~\textcolor{red}{2}\tabularnewline
\end{tabular}
\par\end{centering}
\caption{\label{tab:BoS-Payoff-Matrix}Payoff Matrix of Battle of the Sexes
Game}
\small{
The fact that no strategy profile Pareto dominates both Nash equilibria
($\left(a_{1},a_{2}\right)$ and $\left(b_{1},b_{2}\right)$) implies
(due to Claim \ref{prop:strong-necce}) that the game does not admit
any strong CUE.}
\end{table}

\paragraph{Plausible Equilibria and CUE}

The third equilibrium notion we consider lies between weak CUE and strong CUE; we believe it is the most appropriate equilibrium notion.
To define it, we first need to define which equilibria are likely
to emerge after a player deviates to a new clustering. We assume that
the strategy profile played in the CUE is
focal, and that a player will change her behavior with respect to
this focal profile only if this change increases her clustered
payoff. An equilibrium of the new coarse-utility game is \emph{plausible}
if it can be can reached by a sequence of deviations (starting from
the focal strategy profile $s$) such that each deviation improves
the deviator's clustered payoff. 
\begin{defn}
\label{def:plausible}Fix a strategy profile $s$ and a clustering
profile $f$. An equilibrium $s'\in NE\left(G_{f}\right)$ is \emph{plausible}
with respect to $s$ if there is a sequence $\left(s^{k}\right)_{k\geq0}$
of strategy profiles satisfying: (1) $s^{0}=s$, (2) $\lim_{k\rightarrow\infty}s^{k}=s'$,
and (3) if $s_{i}^{k+1}\neq s_{i}^{k}$ for player $i$ and $k\geq0$,
then $f_{i}\left(\pi_{i}\left(s_{i}^{k+1},s_{-i}^{k}\right)\right)>f_{i}\left(\pi_{i}\left(s_{i}^{k},s_{-i}^{k}\right)\right)$.
We refer to such a sequence $\left(s^{k}\right)_{k\geq0}$ as in \emph{improvement
path}. 
\end{defn}
Let $PNE\left(G_{f},s\right)$ be the set of plausible equilibria
with respect to $s$. A CUE is a pair $\left(f,s\right)$ that satisfies:
(1) each strategy $s_{i}$ is a clustered best reply to the opponent's
strategy, and (2) each clustering $f_{i}$ is a best reply
to the opponent's clustering, in the sense that a player who
chooses a different clustering will be outperformed in each
plausible equilibrium of the new coarse-utility game. 
\begin{defn}
\label{def-CUE}A coarse-utility equilibrium\emph{ (CUE)} is a pair
$\left(f,s\right)$, where $f$ is a clustering profile and $s$ is
a strategy profile satisfying: (1) $s\in NE\left(G_{f}\right)$, and
(2) $\pi_{i}\left(s'\right)\leq\pi_{i}\left(s\right)$ for each player
$i$, each clustering $f'_{i}$, and each plausible equilibrium $s'\in PNE\left(G_{\left(f'_{i},f_{-i}\right)},s\right)$.
\end{defn}

\addtocounter{example}{-1}
\begin{example}[revisited]
 The implausible weak CUE in Example \ref{exm-weak-is-too-weak}
is not a CUE. If player 2 changes her clustering into $f_{2}^{\emptyset}\equiv I_{d},$
then the unique plausible equilibrium of the induced coarse-utility
game $G_{\left(f_{1}^{\mathbb{R}},f_{2}^{\emptyset}\right)}$ is $\left(0.5,0.25\right),$
which yields player 2 a positive payoff, contradicting the assumption
that $((f_{1}^{\mathbb{R}},f_{2}^{\mathbb{R}}),(0.5,0))$ (which results
in player 2 getting a payoff of zero) is a CUE. 
\end{example}

{
Our restriction to plausible equilibria in the 
definition of CUE has a similar motivation to that of
the restriction to focal equilibria in the definition of \emph{stable configurations} given by }\citet{DekelElyEtAl2007Evolution}. 
Strategy profile $s$ is a \emph{CUE outcome} (resp.,
\emph{weak CUE outcome}, \emph{strong CUE outcome}) if there exists a clustering
profile $f$ such that $\left(f,s\right)$ is a CUE (resp., weak CUE,
strong CUE). 

The following simple observation shows that every Nash equilibrium is
a CUE outcome with respect to every clustering profile. This
implies that every game admits a CUE outcome. 
\begin{prop}
\label{fact2:any-NE-is-NE-clustered-1}Let $s^{NE}$ be a Nash equilibrium
of the underlying game. Then $\left(f,s^{NE}\right)$ is a CUE for every clustering profile $f$.
\end{prop}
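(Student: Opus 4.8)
The plan is to verify directly the two defining conditions of a CUE (Definition \ref{def-CUE}), exploiting the fact that a Nash equilibrium of the underlying game $G$ is a ``frozen'' point from which the improvement dynamics cannot move, no matter how a player re-clusters.

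For condition (1), I would simply invoke Proposition \ref{fact1:any-NE-is-NE-clustered}: since $s \in NE(G) \subseteq NE(G_f)$, the profile $s$ is already a clustered best reply for both players, so $s \in NE(G_f)$.

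The substance is in condition (2). Fix a player $i$ and an arbitrary alternative clustering $f'_i$, and consider the induced game $G_{(f'_i, f_{-i})}$. I would prove the key claim that the unique plausible equilibrium with respect to $s$ is $s$ itself, i.e. $PNE(G_{(f'_i,f_{-i})}, s) = \{s\}$. The heart of the argument is to show that no admissible first move exists in any improvement path starting at $s^0 = s$ (Definition \ref{def:plausible}). For player $i$: because $s \in NE(G)$, her unclustered payoff satisfies $\pi_i(s) \ge \pi_i(s'_i, s_{-i})$ for every $s'_i$; applying the weakly increasing function $f'_i$ preserves this inequality, so $f'_i(\pi_i(s)) \ge f'_i(\pi_i(s'_i, s_{-i}))$ for every $s'_i$, and player $i$ cannot strictly increase her clustered payoff by moving away from $s_i$ while $s_{-i}$ is held fixed. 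For player $-i$: her clustering is unchanged at $f_{-i}$, and since $s \in NE(G_f)$ (from condition (1)), she is already playing a clustered best reply to $s_i$, so she too cannot strictly improve. Hence neither player can take the strict-improvement step required by clause (3) of Definition \ref{def:plausible}, and these two observations apply verbatim at every profile equal to $s$.

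An immediate induction then shows that any improvement path from $s$ is constant, $s^k = s$ for all $k$, so its limit is $s$; moreover the same two observations show that both players best reply at $s$ in $G_{(f'_i,f_{-i})}$, so $s \in NE(G_{(f'_i,f_{-i})})$ and $s$ is genuinely a (trivially) plausible equilibrium. Consequently every $s' \in PNE(G_{(f'_i,f_{-i})}, s)$ equals $s$, giving $\pi_i(s') = \pi_i(s) \le \pi_i(s)$, which is exactly condition (2). Since $i$ and $f'_i$ were arbitrary, $(f,s)$ is a CUE. The only point requiring care --- and the conceptual crux --- is the observation that coarsening by a weakly increasing $f'_i$ cannot relocate the maximizer of player $i$'s own payoff, so the freedom to re-cluster gives player $i$ no new profitable move out of a Nash equilibrium; everything else is routine.
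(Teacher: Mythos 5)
Your proof is correct, and it does genuinely more than the paper's own proof; the difference is worth spelling out. The paper's argument is a one-liner: it observes that the constant sequence $(s,s,s,\ldots)$ is an improvement path, so $s\in PNE\bigl(G_{(f'_{i},f_{-i})},s\bigr)$, and stops there. That verifies condition (2) of Definition \ref{def-CUE} only under the existential reading suggested by the informal text preceding the definition (``outperformed in at least one plausible equilibrium''); as literally written, condition (2) quantifies over \emph{each} plausible equilibrium, and for that one must control the entire set $PNE\bigl(G_{(f'_{i},f_{-i})},s\bigr)$, not merely exhibit one element of it. This is exactly what you do: since $f'_{i}$ is weakly increasing, $f'_{i}\left(\pi_{i}\left(s'_{i},s_{-i}\right)\right)>f'_{i}\left(\pi_{i}\left(s\right)\right)$ forces $\pi_{i}\left(s'_{i},s_{-i}\right)>\pi_{i}\left(s\right)$, which is impossible at a Nash equilibrium of $G$, and symmetrically for player $-i$ with her unchanged clustering; hence clause (3) of Definition \ref{def:plausible} blocks any first step, every improvement path is constant by induction, and $PNE\bigl(G_{(f'_{i},f_{-i})},s\bigr)=\{s\}$. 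Your version therefore covers both readings of the definition and is the more robust argument, at the cost of a short induction; the observation you single out as the crux --- that re-clustering by a weakly increasing map cannot manufacture a strict clustered improvement where there is no strict unclustered improvement --- is indeed the load-bearing step and is sound.
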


\begin{proof}
The proposition holds because $s^{NE}\in PNE\left(G_{\left(f'_{i},f_{-i}\right)},s\right)$
for every player $i$ and clustering $f'_{i}$ with respect to the
constant improvement path
$\left(s^{k}\right)_{k\geq0}$=$\left(s^{NE},s^{NE},...,\right)$.  
\end{proof}
In Appendix \ref{sec-partial}, we extend our model to a
  setup with partial observability in which each player privately
  observes her opponent's clustering with probability $p\in [0,1].$ 
{Specifically, we show that all of our results hold for a sufficiently high probability  $p<1$ of observing a deviation of the opponent to a different clustering. By contrast, if the observation probability is close to zero, then clustering can only induce Nash equilibria (see Proposition \ref{prop-no-observability-only-Nash}).}
\paragraph {Evolutionary/Learning Interpretation of CUE}

We think of CUE as a reduced-form solution concept capturing
the essential features of an evolutionary process of social learning in which an agent's coarse utility determines her behavior, the induced behavior determines the agent's success, and success regulates the evolution of coarse utilities (in line with the indirect evolutionary approach, discussed in Section
\ref{subsec-relate})
In what follows, we briefly and informally
present our evolutionary interpretation. 
{
In the process, we also motivate our assumption that players' clusterings are observable.}

Consider two large populations of agents: agents who play the role of player 1 and agents who play the role of player 2. In
each round, agents from each population are randomly matched to play the underlying game against opponents from the other population. Each agent in each population is endowed with a coarse utility, and the agents play a Nash equilibrium of the game induced by their coarse utilities. For simplicity, we focus on ``homogeneous'' populations in which all agents have the same coarse utility. 

With small probability, a few agents (mutants) in one of the populations (say, population 1) may be endowed with a different coarse utility due to a random error or experimentation. We assume that agents of population 2 observe 
whether their opponents are mutants, and that the agents of population 2 and
the mutants of population 1 gradually adapt 
their play, converging to a 
(plausible) equilibrium of the new clustered game (this gradual process corresponds to an improvement path in Definition \ref{def:plausible}). 
{
The assumption of being able to observe mutants with different coarse utility can be justified either by (1) having   pre-play cheap talk, in 
which having different coarse utility induces the 
mutants to communicate differently, or (2) assuming that changes in coarse utility are much slower than changes in the players' behavior, which allows a slow process in which players learn to identify  the presence of mutants with a new coarse utility and a faster process in which they adjust their behavior when being matched with these mutants.} 
Finally, we assume that the total success (fitness) of agents is monotonically
influenced by their (unclustered) payoff in the underlying game, and that there is a
slow process in which the composition of the population evolves. 

This slow process might be the result of a slow flow of new agents who join the population.
Each new agent randomly chooses one of the incumbents in his own population
as a ``mentor'' (and mimics the mentor's coarse utility), where the probabilities are
such that agents with higher fitness are more likely to be chosen as mentors. If the
original population state is not a CUE, 
then there are mutants who
outperform the remaining incumbents in their own population, which in turn implies that the original population state is not stable, as new agents are likely to mimic
more successful mutants. By contrast, if the original population state is a CUE, 
then for any mutant there is a new
plausible equilibrium in which the mutants are weakly outperformed relative to the incumbents of 
their own population; this allows the CUE to remain stable.
{As shown in Appendix \ref{sec-partial}, the same arguments 
also work in setups with partial observability, as long as the probability of observing mutants with different coarse utility is sufficiently high.}

\section{Results}\label{sec-results}

In this section we present various results that characterize weak
CUE, CUE, and strong CUE outcomes in various classes of games.

\subsection{A Folk Theorem for Weak CUE Outcomes}

Coarse utility allows a limited form of commitment, relative to the
existing literature on the indirect evolutionary approach and on delegation,
in the sense that it allows a player to be indifferent only between
payoffs in an interval. Nevertheless, our next result shows 
that a ``folk-theorem''
result holds in our setup with respect to weak CUE outcomes. Specifically,
we show that any individually
rational profile is a weak CUE outcome. 

Before presenting the result, we formally define the  maximin (and minimax) payoff and individual rationality. The maximin (resp., minimax) value of player
$i$, $\underline{M}_{i}$ (resp., $\overline{M}_{i}$), is the highest payoff player $i$ can
guarantee herself without knowing (resp., when knowing) her opponent's strategy. Formally, 
\[
\underline{M}_{i}=\max_{s_{i}\in S_{i}}\min_{s_{-i}\in S_{-i}}\pi_{i}\left(s_{i},s_{j}\right),\,\,\,\,\overline{M}_{i}=\min_{s_{-i}\in S_{-i}}\max_{s_{i}\in S_{i}}\pi_{i}\left(s_{i},s_{j}\right).
\]
It is immediate that $\underline{M}_{i}\leq\overline{M}_{i}$. Von Neumann's Minimax theorem implies  that the
two values coincide if each player's payoff is weakly convex in the
opponent's strategy. 

A strategy profile is weakly
(strongly) individually rational if it yields
each agent a payoff weakly (strictly) higher
than her  maximin (minimax) payoff
$M_{i}$. Formally,
\begin{defn}
  A strategy profile $s\in S$ is weakly (resp., strongly) individually
  rational if 
$\pi_{i}\left(s\right)\geq \underline{M}_{i}$ (resp., $\pi_{i}\left(s\right) > \overline{M}_{i}$)
for each player $i$. 
\end{defn}
Proposition \ref{prop-falk-thm} presents a ``folk-theorem'' (i.e.,
a general feasibility theorem) for weak CUE outcomes. Specifically,
it shows that (1) any strongly individually rational profile is a
weak CUE outcome, and (2) any weak CUE outcome is individually rational.
\begin{prop}
\label{prop-falk-thm}Let $s$ be a strategy profile.
\begin{enumerate}
\item If $s$ is a weak CUE outcome, then $s$ is individually rational. 
\item If $s$ is strongly individually rational, then $s$ is a weak CUE
outcome. 
\end{enumerate}
\end{prop}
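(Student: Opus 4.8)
The plan is to prove the two parts separately, with part (1) being a relatively direct argument and part (2) requiring an explicit construction of a clustering profile.

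For part (1), I would argue by contraposition. Suppose $s$ is a weak CUE outcome, so there is a clustering profile $f$ with $(f,s)$ a weak CUE; in particular $s \in NE(G_f)$. Suppose toward contradiction that $s$ is not individually rational, say $\pi_i(s) < \underline{M}_i$ for some player $i$. The idea is that player $i$ can profitably deviate to a clustering, and that in \emph{every} equilibrium of the resulting coarse-utility game player $i$ does strictly better than $\pi_i(s)$ --- which violates the weak-CUE condition (which only asks for \emph{one} equilibrium in which the deviator is not outperformed). The natural candidate is for player $i$ to deviate to the identity clustering $f_i^{\emptyset} \equiv Id_i$, recovering her true preferences. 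In the game $G_{(f_i^{\emptyset}, f_{-i})}$, player $i$ best-responds with her unclustered utility, and since the maximin value $\underline{M}_i$ is what $i$ can guarantee against any opponent strategy, in any Nash equilibrium of that game player $i$ earns at least $\underline{M}_i > \pi_i(s)$. (Here I would need to check that best-responding in equilibrium guarantees at least the maximin value, which follows because $\pi_i(BR_i(s_{-i}), s_{-i}) \geq \min_{s_{-i}} \max_{s_i} \pi_i = \underline{M}_i$.) This contradicts part (2) of the definition of weak CUE.

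For part (2), suppose $\pi_i(s) > \overline{M}_i$ for each player $i$. I would construct a clustering profile $f$ under which (i) $s$ is a Nash equilibrium of $G_f$, and (ii) any deviation by a player to a new clustering admits an equilibrium in which the deviator is held to at most $\pi_i(s)$. The natural construction is to have each player $i$ cluster together all payoffs in an interval of the form $[\,\overline{M}_i,\, \pi_i(s)\,]$ (or a slightly wider interval), i.e., use $f_i^{[\overline{M}_i, \pi_i(s)]}$ or something in that spirit, so that player $i$ becomes indifferent across this band of payoffs and is thus willing to play $s_i$ as a coarse best reply. The clustering should be chosen so that player $i$'s coarse best reply includes a strategy that punishes a deviating opponent $-i$, driving $-i$'s payoff down toward $\overline{M}_{-i} < \pi_{-i}(s)$. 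Concretely, after $-i$ deviates to a different clustering, player $i$ (being indifferent over the clustered band) can select the minimaxing strategy against $-i$, yielding an equilibrium in which $-i$ gets at most $\overline{M}_{-i}$, which is strictly less than $\pi_{-i}(s)$.

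The main obstacle I anticipate is the precise design of the clusterings in part (2) so that three things hold simultaneously: the target profile $s$ remains a coarse Nash equilibrium; the punishment (minimax) strategy for player $i$ is a coarse best reply once the opponent deviates; and the whole configuration is consistent with the continuum/convexity structure (recall payoffs are continuous and weakly concave in one's own strategy, and strategies live in convex compact sets). In particular, I would need to verify that clustering an interval of payoffs together genuinely makes the minimaxing strategy a coarse best reply after a deviation --- this requires that the minimax strategy yields player $i$ a payoff landing inside the clustered band, so that $i$ is indeed indifferent and willing to punish. Handling the gap between maximin and minimax (which can differ when payoffs are not convex in the opponent's strategy) is why part (1) uses $\underline{M}_i$ while part (2) uses the strictly larger threshold $\overline{M}_i$; I would keep careful track of which value is needed at each step.
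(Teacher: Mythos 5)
Your part (1) is correct and is essentially the paper's own argument: deviate to the identity clustering $f_i^{\emptyset}\equiv Id_i$ and use the fact that $\max_{s_i}\pi_i(s_i,s_{-i})\geq \underline{M}_i$ for every $s_{-i}$, so every Nash equilibrium of the post-deviation coarse game pays player $i$ at least $\underline{M}_i>\pi_i(s)$, contradicting the weak-CUE condition.

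Part (2), however, has a genuine gap in the construction. The clustering $f_i^{[\overline{M}_i,\pi_i(s)]}$ does not cluster payoffs \emph{above} $\pi_i(s)$ together with $\pi_i(s)$; so whenever some $s_i'$ satisfies $\pi_i(s_i',s_{-i})>\pi_i(s)$ --- which is precisely the interesting case for a folk theorem, since $s$ need not be a Nash equilibrium of $G$ --- the strategy $s_i$ is not a coarse best reply and $s\notin NE(G_f)$, so condition (1) of the definition of weak CUE already fails. The same defect hits the punishment phase: for the profile in which $-i$ plays the minimaxing strategy $\overline{s}_{-i}$ to be an equilibrium of the post-deviation game, player $-i$ must be willing to play $\overline{s}_{-i}$, but under your clustering she strictly prefers any strategy paying her more than $\pi_{-i}(s)$, and $\pi_{-i}(\overline{s}_{-i},\cdot)$ may also fall below the bottom of her band. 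You flag this obstacle yourself but do not resolve it, and widening the interval to $[\overline{M}_{-i},\infty)$ still does not fix it. The paper's resolution is to take $f=f^{\mathbb{R}}$, i.e., each player clusters \emph{all} payoffs together. Then $s\in NE(G_{f^{\mathbb{R}}})$ holds trivially; and after player $i$ deviates to any $f_i'$, her opponent remains indifferent among all strategies, so the profile in which $-i$ plays $\overline{s}_{-i}$ and $i$ plays a clustered best reply to it is a Nash equilibrium of $G_{(f_i',f^{\mathbb{R}}_{-i})}$ that holds $i$ to at most $\overline{M}_i<\pi_i(s)$. With that single change --- clustering everything rather than only the band $[\overline{M}_i,\pi_i(s)]$ --- your argument goes through.
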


\begin{proof} {[Sketch]}
Part (1) holds because if a profile is not individually rational, a
player can deviate to not clustering any payoffs, and obtain  an
unclustered payoff of at least $\underline{M}_{i}$. Part (2) holds
because any strongly individually rational profile can be supported by
both players clustering all payoffs together, and by punishing
deviations by the opponent playing the strategy that is most harmful
to  the deviator.

We defer the formal details of the proof to Appendix \ref{proof-falk-thm}.
\qedhere 
\end{proof}

\subsection{Stackelberg Equilibria and CUE Outcomes}

In this subsection we study the relations between CUE outcomes and equilibria of a sequential (Stackleberg-leader) variant of the game.

We start 
by showing that every Pareto-efficient profile is a strong CUE
outcome, provided that no player can gain by becoming a Stackelberg
leader. Formally, 

\begin{prop}
\label{prop-efficient-Stackelberg-robust}Let $s$ be a strategy
profile that satisfies the following conditions:
\begin{enumerate}
  \item Pareto efficiency: if
$\pi_{i}\left(s'\right)>\pi_{i}\left(s\right)$, 
then $\pi_{-i}\left(s'\right)<\pi_{-i}\left(s\right)$ $\forall s'\in S$;
and
\item robustness to Stackelberg-leaders: if $s'_{-i}\in BR_{-i}\left(s_{i}'\right)$,
then $\pi_{i}\left(s'\right)\leq\pi_{i}\left(s\right)$ $\forall s'\in S$.
\end{enumerate}
Then $s$ is a strong CUE outcome.

\end{prop}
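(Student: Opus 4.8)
The plan is to exhibit an explicit clustering profile $f$ witnessing that $s$ is a strong CUE outcome, namely the \emph{aspiration} clustering in which each player lumps together all payoffs at or above her target value. Formally, set $f_{i}:=f_{i}^{[\pi_{i}(s),\infty)}$ for both players, so that $f_{i}\circ\pi_{i}=\min\bigl(\pi_{i},\pi_{i}(s)\bigr)$: player $i$'s coarse utility agrees with $\pi_{i}$ on payoffs below $\pi_{i}(s)$ and is capped at the constant $\pi_{i}(s)$ on payoffs at or above $\pi_{i}(s)$. The point of this choice is that $\pi_{i}(s)$ is then the \emph{global maximum} of $f_{i}\circ\pi_{i}$, attained at $s$.

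Verifying condition (1) of Definition \ref{def-strong-CUE} is then immediate: since $f_{i}\bigl(\pi_{i}(s)\bigr)=\pi_{i}(s)=\max_{x}f_{i}(x)$, no deviation $s'_{i}$ can raise player $i$'s coarse utility above $f_{i}\bigl(\pi_{i}(s)\bigr)$, so $s_{i}$ is a coarse best reply to $s_{-i}$ and $s\in NE(G_{f})$.

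The substance of the proof is condition (2). Fix a player $i$, an arbitrary deviating clustering $f'_{i}$, and any equilibrium $s'\in NE\bigl(G_{(f'_{i},f_{-i})}\bigr)$; I must show $\pi_{i}(s')\le\pi_{i}(s)$. The key observation is that in $s'$ player $-i$ still best responds with her \emph{unchanged} clustering $f_{-i}=f_{-i}^{[\pi_{-i}(s),\infty)}$, so I only need to analyze her coarse best reply to $s'_{i}$, and I split on whether she can reach her aspiration level. If $BRP_{-i}(s'_{i})\ge\pi_{-i}(s)$, then the coarse maximum $\pi_{-i}(s)$ is achieved only on $\{t:\pi_{-i}(s'_{i},t)\ge\pi_{-i}(s)\}$, so any coarse best reply $s'_{-i}$ satisfies $\pi_{-i}(s')\ge\pi_{-i}(s)$; Pareto efficiency then rules out $\pi_{i}(s')>\pi_{i}(s)$. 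If instead $BRP_{-i}(s'_{i})<\pi_{-i}(s)$, then on the whole attainable range $f_{-i}$ acts as the identity, so player $-i$'s coarse best reply coincides with her unclustered best reply, i.e.\ $s'_{-i}\in BR_{-i}(s'_{i})$; robustness to Stackelberg-leaders then gives $\pi_{i}(s')\le\pi_{i}(s)$ directly. Either way the deviator is not outperformed, establishing condition (2). (Note that this bound uses only player $-i$'s behavior, so it holds for \emph{every} equilibrium of the deviated game, as strong CUE requires.)

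The crux — and the step I expect to be the real obstacle — is choosing the clustering correctly; the two hypotheses are tailored to the two cases above, but only for this particular clustering. The naive alternatives both fail: clustering \emph{all} payoffs together ($f_{i}^{\mathbb{R}}$) removes the opponent's incentive to respond at all, so nothing prevents a deviator-favorable equilibrium; and clustering the \emph{lower} interval ($f_{i}^{(-\infty,\pi_{i}(s)]}$) can already violate condition (1) whenever player $i$ has a profitable material deviation against $s_{-i}$ (i.e.\ $BRP_{i}(s_{-i})>\pi_{i}(s)$), and even when it does not it leaves configurations in which the best-responding opponent is globally indifferent and can be induced to play a response giving the deviator more than $\pi_{i}(s)$ while she herself gets less than $\pi_{-i}(s)$ — something Pareto efficiency does not forbid. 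Capping satisfaction \emph{from above} is exactly what reconciles the two requirements: it forces the target to be the coarse optimum (handing us condition (1) for free) while leaving the opponent materially best-responding precisely in the sub-aspiration region where Stackelberg robustness bites, with Pareto efficiency covering the complementary region. A minor point to check along the way is that best replies are well defined and the coarse best-reply correspondences are as described, which follows from the assumed continuity and compactness.
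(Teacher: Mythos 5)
Your proof is correct and takes essentially the same approach as the paper: it uses the same witness clustering $f_i^{[\pi_i(s),\infty)}$ (the paper's $f^{\geq\pi(s)}$) and the same dichotomy, namely that any equilibrium of the deviated game has either $\pi_{-i}(s')\geq\pi_{-i}(s)$ (handled by Pareto efficiency) or $s'_{-i}\in BR_{-i}(s'_i)$ (handled by Stackelberg robustness). Your case split on whether $BRP_{-i}(s'_i)\geq\pi_{-i}(s)$ simply makes explicit the justification of that dichotomy, which the paper asserts without elaboration.
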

\begin{proof}{[sketch]}
Observe that profile $s$
is supported as a strong CUE by each player clustering together all
payoffs above her CUE payoff. This clustering implies that if a player
deviates, her opponent in any equilibrium of the new clustered game
either obtains at least her CUE payoff or she plays the best
reply to her unclustered utility. In the first (resp., second) case,  condition 1 (resp., 2)
implies that the deviator cannot gain.
See Appendix \ref{proof-sufficeint-strong} for details.
\end{proof}
\begin{example}
\label{exam-Stackelberg}We revisit the symmetric Cournot game of Example \ref{exm-weak-is-too-weak}: $S_{i}=\left[0,1\right]$
and $\pi_{i}\left(s_{i},s_{j}\right)=s_{i}\cdot\left(1-s_{i}-s_{-i}\right).$ Proposition \ref{prop-efficient-Stackelberg-robust} implies that the efficient profile $(0.25,0.25)$, in which the players equally share the monopoly quantity, is a strong CUE outcome. Observe that this profile, which induces each player a payoff of $\frac{1}{8},$ is robust to Stackelberg leaders, because a Stackelberg leader can obtain a payoff of at most $\frac{1}{8}$ (by the leader playing 0.5, and her opponent best replying by choosing 0.25).
\end{example}

We next show that any equilibrium of the sequential ``Stackelberg-leader''
variant of the underlying game is a CUE outcome.

We say that profile $s$ is a Stackelberg equilibrium if it is the
equilibrium outcome of a sequential variant of the game, in which
one of the players (the Stackelberg leader) plays first, and her opponent
observes the leader's strategy and best replies to it. Formally,
\begin{defn}
\label{def-Stackelberg}Strategy profile $s$ is a \emph{Stackelberg-equilibrium}
of the underlying game $G$ if there exists a player $i$ (the leader)
such that: 
\begin{enumerate}
\item the opponent best replies to the leader: $s_{-i}\in BR_{-i}\left(s_{i}\right)$,
and
\item the leader cannot achieve a higher payoff by deviating: $\pi_{i}\left(s'\right)\leq\pi_{i}\left(s\right)$
  for all profiles $s'$ for which $s'_{-i}\in BR_{-i}\left(s'_{i}\right)$.
\end{enumerate}
\end{defn}
Our next result shows that every Stackelberg equilibrium is a CUE outcome
that is supported by the leader (resp., follower) clustering together
all (no) payoffs. We defer the simple proof to Appendix
\ref{proof-stackelberg}.
\begin{prop}
\label{prop-Stackelberg}Let $s$ be a Stackelberg-equilibrium with player $i$ as the leader. Then $\left(\left(f_{i}^{\mathbb{R}},f_{-i}^{\emptyset}\right),s\right)$
is a CUE.
\end{prop}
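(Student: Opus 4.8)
The plan is to verify the two defining conditions of a CUE (Definition \ref{def-CUE}) for the pair $\left(\left(f_{i}^{\mathbb{R}},f_{-i}^{\emptyset}\right),s\right)$, treating the leader $i$ and the follower $-i$ asymmetrically, since their clusterings play very different roles. First I would check condition (1), namely $s\in NE\left(G_{f}\right)$ with $f=\left(f_{i}^{\mathbb{R}},f_{-i}^{\emptyset}\right)$. The leader's clustered utility is $f_{i}^{\mathbb{R}}\circ\pi_{i}\equiv0$, so she is indifferent among all strategies and $s_{i}$ is trivially a clustered best reply. The follower's clustered utility is $f_{-i}^{\emptyset}\circ\pi_{-i}=\pi_{-i}$, so her clustered best replies are exactly $BR_{-i}\left(s_{i}\right)$, and the first Stackelberg condition ($s_{-i}\in BR_{-i}\left(s_{i}\right)$) gives that $s_{-i}$ is a best reply. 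Hence $s\in NE\left(G_{f}\right)$.

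Next I would handle condition (2) for a deviation by the leader $i$ to some clustering $f'_{i}$, producing the game $G_{\left(f'_{i},f_{-i}^{\emptyset}\right)}$. The key point is that the follower's clustering is unchanged, so in every Nash equilibrium $s'$ of this game the follower best replies with her material utility, i.e. $s'_{-i}\in BR_{-i}\left(s'_{i}\right)$. The second Stackelberg condition then applies directly and yields $\pi_{i}\left(s'\right)\leq\pi_{i}\left(s\right)$. Note that this bound holds for \emph{every} equilibrium of the deviation game, so plausibility is not even needed for the leader's deviation; this is what lets the leader be supported by the coarsest clustering $f_{i}^{\mathbb{R}}$.

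The more delicate case, and the one I expect to be the crux, is a deviation by the follower $-i$ to some clustering $f'_{-i}$, producing $G_{\left(f_{i}^{\mathbb{R}},f'_{-i}\right)}$. Here the leader's clustered utility is still constant, so I would argue that along any improvement path (Definition \ref{def:plausible}) the leader can never strictly improve, since the required inequality $f_{i}^{\mathbb{R}}\left(\cdot\right)>f_{i}^{\mathbb{R}}\left(\cdot\right)$ is impossible; hence the leader's strategy stays fixed at $s_{i}$ throughout the path and $s'_{i}=s_{i}$ in every plausible equilibrium $s'$. Consequently $\pi_{-i}\left(s'\right)=\pi_{-i}\left(s'_{-i},s_{i}\right)\leq\max_{s''_{-i}}\pi_{-i}\left(s''_{-i},s_{i}\right)=\pi_{-i}\left(s_{-i},s_{i}\right)=\pi_{-i}\left(s\right)$, where the last equality again uses $s_{-i}\in BR_{-i}\left(s_{i}\right)$. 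This is precisely where plausibility is essential: without it, an arbitrary Nash equilibrium of $G_{\left(f_{i}^{\mathbb{R}},f'_{-i}\right)}$ could move the indifferent leader to a strategy that benefits the follower (exactly the phenomenon exploited in Example \ref{exm-weak-is-too-weak}), and the claim would fail for strong CUE. Having verified both conditions, I would conclude that $\left(\left(f_{i}^{\mathbb{R}},f_{-i}^{\emptyset}\right),s\right)$ is a CUE. The only genuine subtlety to get right is the asymmetry in the argument — the leader case is a one-line consequence of the Stackelberg property valid for all equilibria, whereas the follower case hinges on the observation that an indifferent player is immobile along improvement paths, which pins down $s'_{i}=s_{i}$ and thereby caps the follower at her Stackelberg payoff.
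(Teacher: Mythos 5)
Your proposal is correct and follows essentially the same route as the paper's own proof: the leader's deviation is handled by noting that the uncluttered follower always best replies (so Stackelberg condition 2 applies to every equilibrium of the deviation game), and the follower's deviation is handled by observing that the fully-clustered leader cannot strictly improve and hence never moves along an improvement path, pinning the follower to at most her Stackelberg payoff. You simply spell out the details (including the trivial verification of condition (1)) more explicitly than the paper does.
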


\begin{example}
  \label{exam-Stackelberg1}We revisit the symmetric Cournot game of   Example \ref{exm-weak-is-too-weak} yet again.  Recall that
 $S_{i}=\left[0,1\right]$,
$\pi_{i}\left(s_{i},s_{j}\right)=s_{i}\cdot\left(1-s_{i}-s_{-i}\right)$.
It is well known that $\left(0.5,0.25\right)$ is the 
Stackelberg equilibrium of the game in which player 1 is the leader
and chooses her quantity first. Proposition \ref{prop-Stackelberg}
implies that $(0.5,0.25)$ 
is a CUE outcome. By clustering all of her payoffs together, player 1 commits to keep
playing 0.5 (as it is always a clustered best reply). Player 2, who
cannot gain anything from clustering, plays her unclustered
best reply 0.25.
\end{example}

\subsection{Constant-Sum and Common-Interest Games}\label{subsec:constant-sum}

In this section, we show a close connection between Nash equilibria
and CUE outcomes in both constant-sum games and common-interest games.

An underlying game is constant sum if the sum of payoffs is a constant. Formally: 
\begin{defn}
An underlying game $G=\left(\left\{ 1,2\right\} ,S,\pi\right)$ \emph{is
constant-sum} if $\pi_{1}\left(s\right)+\pi_{2}\left(s\right)=\pi_{1}\left(s'\right)+\pi_{2}\left(s'\right)\,\,\,\forall s.s'\in S.$

Recall that all Nash equilibria of a constant-sum game yield each
player her minimax (=maximin) payoff, that is, $s\in NE\left(G\right)$
implies that $\pi_{i}\left(s\right)=M_{i}\equiv \underline{M}_{i}=\overline{M}_{i}$. Observe that the constant
sum of payoffs must be equal to the sum of the minimax payoffs, that
is, $\pi_{1}\left(s\right)+\pi_{2}\left(s\right)=M_{1}+M_{2}$ for
any profile $s$.
\end{defn}
Next we show that (1) every Nash equilibrium of the underlying
constant-sum game is a strong CUE, and (2) every weak CUE outcome provides
each player an unclustered payoff that is equal to the game's value.
\begin{prop}\label{claim-constant-sum}
If the underlying game $G$ is  constant-sum, then
\begin{enumerate}
\item[(a)] every Nash equilibrium of $G$ is a strong CUE outcome;
\item[(b)] every weak CUE yields each player $i$ an unclustered payoff of $M_{i}$. 
\end{enumerate}
\end{prop}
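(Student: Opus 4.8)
The plan is to prove each part separately, using the two earlier results on strong CUE (Proposition~\ref{prop:strong-necce}) and weak CUE (Proposition~\ref{prop-falk-thm}) together with the special structure of constant-sum games, namely that every Nash equilibrium delivers each player exactly her value $M_i$ and that $\pi_1(s)+\pi_2(s)=M_1+M_2$ for \emph{all} profiles $s$.

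For part (a), let $s^{NE}$ be a Nash equilibrium of $G$; by Proposition~\ref{fact2:any-NE-is-NE-clustered-1} it is a CUE outcome for every clustering profile, but I need the \emph{strong} CUE property, so I would work directly from Definition~\ref{def-strong-CUE}. The natural clustering to use is the trivial one $f=(f_1^{\emptyset},f_2^{\emptyset})=Id$, under which $G_f=G$ and $s^{NE}\in NE(G)$, verifying condition~(1). For condition~(2), suppose player $i$ deviates to some clustering $f'_i$ and let $s'\in NE(G_{(f'_i,f_{-i})})$ be any equilibrium of the induced game. I must show $\pi_i(s')\le\pi_i(s^{NE})=M_i$. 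The key observation is that $s'$ being an equilibrium of the coarse-utility game means the opponent $-i$ (whose clustering is still the identity) is playing an unclustered best reply to $s'_i$; hence $\pi_{-i}(s')\ge\min_{s_i}\max_{s_{-i}}\pi_{-i}=M_{-i}$, or more simply $\pi_{-i}(s')\ge \underline{M}_{-i}=M_{-i}$ since $-i$ optimizes her true payoff against $s'_i$. But by the constant-sum identity $\pi_i(s')=M_1+M_2-\pi_{-i}(s')\le M_1+M_2-M_{-i}=M_i=\pi_i(s^{NE})$, which is exactly what is required.

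For part (b), let $(f,s)$ be a weak CUE. Proposition~\ref{prop-falk-thm}(1) already gives that $s$ is individually rational, i.e.\ $\pi_i(s)\ge M_i$ for each $i$ (since $\underline M_i=\overline M_i=M_i$ here). It then suffices to show the reverse inequality $\pi_i(s)\le M_i$ for each player, because combined they force $\pi_i(s)=M_i$. This reverse bound follows from summing: $\pi_i(s)=M_1+M_2-\pi_{-i}(s)\le M_1+M_2-M_{-i}=M_i$, using individual rationality of the \emph{opponent}, $\pi_{-i}(s)\ge M_{-i}$. So in fact part~(b) is almost immediate once individual rationality (from Proposition~\ref{prop-falk-thm}) is invoked in both coordinates, and the constant-sum structure converts each player's lower bound into the other player's upper bound.

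The only real subtlety, and the step I would be most careful about, is the claim in part (a) that the opponent $-i$ earns at least $M_{-i}$ in \emph{every} equilibrium $s'$ of $G_{(f'_i,f_{-i})}$. Because $-i$ keeps the identity clustering, her coarse payoff equals her material payoff, so in equilibrium she plays a genuine best reply $\max_{s_{-i}}\pi_{-i}(s_{-i},s'_i)\ge \underline{M}_{-i}=M_{-i}$; this uses only that $\underline M_{-i}$ is a lower bound on the value of any best-reply against a fixed opponent strategy, which holds by definition of the maximin. I would state this cleanly to avoid conflating the deviator's clustered optimization with the responder's unclustered optimization. Everything else is routine algebra with the constant-sum identity, so I expect no further obstacles.
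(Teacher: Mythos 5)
Your proposal is correct and follows essentially the same route as the paper: part (a) is proved by keeping the identity clustering for the non-deviator, noting she earns at least $\underline{M}_{-i}=M_{-i}$ in any equilibrium of the post-deviation coarse game, and converting this into an upper bound $M_i$ for the deviator via the constant-sum identity; part (b) combines individual rationality from Proposition~\ref{prop-falk-thm} with $\pi_1(s)+\pi_2(s)=M_1+M_2$. The one point you flag as a subtlety -- that the responder's unclustered best reply yields at least her maximin value -- is exactly the content of Lemma~\ref{fact-minimax} that the paper invokes, so there is no gap.
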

The relatively straightforward proof of Proposition~\ref{claim-constant-sum} can be found in 
Appendix \ref{proof-constant-sum}. 

Next we show that the close connection between Nash equilibria and CUE outcomes
holds in the class of games in which all players have common interests, in the sense that their payoffs are always equal. 
\begin{defn}
A game has \emph{common interests} if $\pi_{i}\left(s\right)=\pi_{-i}\left(s\right)$
for each $s\in S$. 
\end{defn}

A strategy profile is \emph{Pareto-dominant} if it maximizes
the payoffs of all players, that is, $s$ is a Pareto-dominant profile
if $\pi_{i}\left(s\right)\geq\pi_{i}\left(s'\right)$ for each player
$i$ and profile $s'$. Observe that a common-interest game
admits at least one Pareto-dominant strategy profile, which must be
a Nash equilibrium. 

Our next result shows that the set of CUE outcomes coincides with
the set of Nash equilibria, and that the set of strong CUE outcomes
coincides with the set of Pareto-dominant Nash equilibria.
\begin{prop}\label{claim-common-interest}
If the underlying game $G$ has common interests, then
\begin{enumerate}
\item[(a)] $s$ is a CUE outcome iff it is a Nash equilibrium of $G$; 
\item[(b)] $s$ is a strong CUE outcome iff it is a Pareto-dominant
  Nash equilibrium 
of $G$.
\end{enumerate}
\end{prop}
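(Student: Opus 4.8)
The plan is to exploit the fact that in a common-interest game both players share a single payoff function $\pi:=\pi_{1}=\pi_{2}$, so that $\pi$ acts as a common ``potential'' that increases along every improvement path, whatever clustering is in force. Write $\bar\pi:=\max_{s'\in S}\pi(s')$; this is attained since $\pi$ is continuous and $S$ compact, and any maximizer $s^{*}$ is automatically a Nash equilibrium of $G$ and, by Proposition \ref{fact1:any-NE-is-NE-clustered}, of every coarse-utility game.

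For part (a), the direction ``Nash $\Rightarrow$ CUE outcome'' is immediate from Proposition \ref{fact2:any-NE-is-NE-clustered-1}. For the converse I argue by contraposition: suppose $s\in NE(G_{f})$ but $s\notin NE(G)$, say player $1$ has an unclustered profitable deviation, so $\max_{s_{1}}\pi(s_{1},s_{2})>\pi(s)$. I let player $1$ deviate to the identity clustering $f_{1}^{\emptyset}$. The key observation is that in $G_{(f_{1}^{\emptyset},f_{2})}$ every improving move raises $\pi$: a move by player $1$ raises $\pi_{1}=\pi$ by definition, while a strictly $f_{2}$-improving move by player $2$ raises $f_{2}\circ\pi$ and hence, as $f_{2}$ is weakly increasing, raises $\pi$ as well. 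Thus $\pi$ is strictly increasing along any improvement path, and the first step, in which player $1$ best-replies to $s_{2}$, already yields a profile $s^{1}$ with $\pi(s^{1})=\max_{s_{1}}\pi(s_{1},s_{2})>\pi(s)$. It then suffices to produce a \emph{plausible} equilibrium $s'$ of $G_{(f_{1}^{\emptyset},f_{2})}$ with $\pi(s')>\pi(s)$, which contradicts the CUE condition of Definition \ref{def-CUE} for player $1$.

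To build such an equilibrium I would let $v$ be the supremum of $\pi(t)$ over all profiles $t$ reachable from $s$ by a finite improvement path, so $v\ge\pi(s^{1})>\pi(s)$. Picking reachable $u^{n}$ with $\pi(u^{n})\uparrow v$ and (by compactness) a convergent subsequence $u^{n}\to u^{*}$, continuity gives $\pi(u^{*})=v$. I claim $u^{*}$ is an equilibrium of $G_{(f_{1}^{\emptyset},f_{2})}$: if some player could strictly improve at $u^{*}$ and thereby reach a $\pi$-value $w>v$ (for player $2$ via a deviation $\hat s_{2}$ with $w=\pi(u_{1}^{*},\hat s_{2})$), then by continuity the same deviation is strictly improving from the nearby $u^{n}$ and pushes $\pi$ above $v$, contradicting the definition of $v$. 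The one delicate point is player $2$'s move, where a strict $f_{2}$-improvement must persist near $u^{*}$; this is handled by noting that $f_{2}(w)>f_{2}(v)$ forces $w$ to lie strictly above the flat piece of $f_{2}$ through $v$, so $f_{2}(\pi(u_{1}^{n},\hat s_{2}))>f_{2}(v)\ge f_{2}(\pi(u^{n}))$ for large $n$. Finally one must exhibit a single improvement path actually converging to such an equilibrium in the sense of Definition \ref{def:plausible} (e.g.\ a ``half-greedy'' path realizing at each step at least half of the available $\pi$-increase). This last step is the main obstacle: with continuum strategy sets and set-valued clustered best replies one has to rule out the path oscillating among several equilibria of value $v$ and secure genuine convergence to one plausible equilibrium.

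Part (b) is then short. For ``$\Leftarrow$'', if $s$ is a Pareto-dominant Nash equilibrium then $\pi(s)=\bar\pi$, so for every player $i$, clustering $f_{i}'$, and equilibrium $s'$ of the induced game, $\pi_{i}(s')=\pi(s')\le\bar\pi=\pi_{i}(s)$; combined with $s\in NE(G)\subseteq NE(G_{f})$ (Proposition \ref{fact1:any-NE-is-NE-clustered}), this makes $(f,s)$ a strong CUE for any $f$, e.g.\ $f=(f_{1}^{\emptyset},f_{2}^{\emptyset})$. For ``$\Rightarrow$'', let $(f,s)$ be a strong CUE; applying Proposition \ref{prop:strong-necce} with the Nash equilibrium $s^{*}$ gives $\pi(s)\ge\pi(s^{*})=\bar\pi$, hence $\pi(s)=\bar\pi$. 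Thus $s$ maximizes $\pi$, which makes it simultaneously Pareto-dominant and, since no unilateral deviation can exceed $\bar\pi$, a Nash equilibrium of $G$.
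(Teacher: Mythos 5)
Your proposal is correct and follows essentially the same route as the paper in both parts: part (a)'s converse is proved by letting the non-best-replying player deviate to the identity clustering and observing that the common payoff $\pi$ strictly increases along every improvement path (a move by player $2$ that raises $f_{2}\circ\pi$ must raise $\pi$ since $f_{2}$ is weakly increasing), and part (b) combines Proposition \ref{fact1:any-NE-is-NE-clustered} with Proposition \ref{prop:strong-necce} applied to a $\pi$-maximizing Nash equilibrium. The one ``obstacle'' you flag --- exhibiting a single improvement path that actually converges, in the sense of Definition \ref{def:plausible}, to a plausible equilibrium of value above $\pi(s)$ --- is precisely the step the paper itself treats informally: it takes the path in which, at each stage, a non-best-replying player moves to a clustered best reply, and asserts (without a compactness or selection argument) that this acyclic, $\pi$-increasing path converges to some $s''\in PNE(G_{(Id_{i},f_{-i})},s)$ with $\pi_{i}(s'')>\pi_{i}(s)$. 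So your subsequence construction is, if anything, more careful than the published argument, and you have not missed any idea the paper supplies.
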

The proof of Proposition~\ref{claim-common-interest} can be found in Appendix
\ref{proof-common-interest}). 

\section{CUE in Interval Games}\label{sec:interval-games}

In the previous section, we characterized CUE outcomes in which
both players play their unclustered best replies (i.e., Nash equilibria)
and CUE in which one of the players play her
unclustered best reply (i.e., Stackelberg-like equilibria). Arguably, the most interesting CUE outcomes (which
introduce new kinds of behavior) are the remaining set of CUE outcomes,
in which neither player plays her unclustered best reply. In this section,
we characterize this class of CUE outcomes under the 
widely applied assumption
that the game has monotone externalities. Specifically, we show that
in all these CUE, both players deviate from best replying in the
direction that is beneficial to the opponent.
One can interpret this result as showing that by  committing to a coarse utility the player commits to a favorable action vis a vis their co-player. 

\subsection{Games with Monotone Externalities}

An interval game has monotone externalities if increasing one's strategy
always affects the opponent's payoff in the same direction: either
positive externalities (increasing one's strategy increases the opponent's
payoff), or negative externalities (increasing one's strategy decreases
the opponent's payoff). Formally,
\begin{defn}
An interval game $G=\left(S,\pi\right)$ has \emph{monotone externalities}
if either:
\begin{enumerate}
\item \emph{Positive externalities}:
$\pi_{i}\left(s_i,s_{-i}\right)>\pi_{i}\left(s_i,s'_{-i}\right)$ for each player $i$, each strategy $s_i$, and each pair of strategies satisfying $s_{-i}>s'_{-i}$.
\item \emph{Negative externalities}: $\pi_{i}\left(s_i,s_{-i}\right)<\pi_{i}\left(s_i,s'_{-i}\right)$ for each player $i$, each strategy $s_i$, and each pair of strategies satisfying $s_{-i}>s'_{-i}$.
\end{enumerate}
\end{defn}
Games with monotone externalities are common in the economic literature (e.g., Cournot competition,
Bertrand competition with differentiated goods, Tullock competition,
public good games, etc.). 

We say that $s_{i}$ is \emph{externalities-higher} than $s'_{i}$ 
(or, equivalently, that  $s'_{i}$ is \emph{externalities-lower} than $s_{i}$),
and denote it by $s_{i}\succ_{-i}s'_{i}$, if the opponent gains by
player $i$ changing her strategy from $s'_{i}$ to $s_{i}$. Formally
\begin{defn}
Let $s_{i},s'_{i}\in S_{i}$ be two strategies in a game with monotone
externalities. We write $s_{i}\succ_{-i}s'_{i}$ (resp., $s_{i}\succeq_{-i}s'_{i}$)
if
\begin{enumerate}
  \item the game has positive externalities and $s_{i}>s'_{i}$ (resp.,
        $s_{i}\geq s'_{i}$); or
  \item the game has negative externalities and $s_{i}<s'_{i}$ (resp.,
    $s_{i}\leq s'_{i}$). 
\end{enumerate}
\end{defn}

Theorem \ref{Thm:neccesary-conditions-monotone} characterizes the CUE outcomes in which neither
player plays her unclustered best reply. It shows that in all such CUE outcomes:
\begin{enumerate}
\item 
each player's CUE behavior treats her opponent better  than unclustered best reply behavior would do;  and
\item 
any externalities-lower profile cannot Pareto dominates the CUE outcome, and if it improves the unclustered payoff of one of the players then the opponent cannot be playing an unclustered best reply. (We interpret this latter condition as requiring that no player can
gain by becoming a Stackelberg leader and deviating to an
externalities-lower strategy.) 
\end{enumerate}

\begin{thm}
\label{Thm:neccesary-conditions-monotone}Let $G$
be an interval game with monotone externalities. Let $s$ be an interior
CUE outcome in which $s_{i}\notin BR_{i}\left(s_{-i}\right)$ for
each player $i$. Then:
\begin{enumerate}
\item $s_{i}\succ_{-i}BR_{i}\left(s_{-i}\right)$
for each player $i$;
\item if $s'_{i}\preceq_{-i}s_{i}$, $s'_{-i}\preceq_{i}s_{-i}$, and either (a) $\pi_{-i}\left(s'\right)\geq\pi_{-i}\left(s\right)$ or (b) $s'_{-i}\in BR_{-i}\left(s_{i}'\right)$, then $\pi_{i}\left(s'\right)\leq\pi_{i}\left(s\right)$.
\end{enumerate}
\end{thm}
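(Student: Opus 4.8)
The plan is to prove both statements by contraposition against the defining inequality of a CUE: for each claim it suffices to exhibit, for player $i$, a single clustering deviation $f_i'$ and a single plausible equilibrium $s'\in PNE(G_{(f_i',f_{-i})},s)$ with $\pi_i(s')>\pi_i(s)$. Before doing this I would isolate one structural lemma about the opponent's behavior along an improvement path: since $f_{-i}$ is weakly increasing, any move by player $-i$ that strictly raises her clustered payoff $f_{-i}\circ\pi_{-i}$ must strictly raise her unclustered payoff $\pi_{-i}$ as well. Hence every improving move by the opponent is directed toward $BR_{-i}$ of the current profile and, by monotone externalities, has a determinate effect on player $i$'s payoff. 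I would also record the elementary consequence of $s_i\notin BR_i(s_{-i})$ that, against the fixed opponent strategy, player $i$ leaves money on the table: $\pi_i(s)<BRP_i(s_{-i})$.

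For statement 2 I would argue directly. The hypotheses $s_i'\preceq_{-i}s_i$ and $s_{-i}'\preceq_i s_{-i}$ place $s'$ weakly below $s$ in the externality-decreasing direction, so monotone externalities give $\pi_i(s_i',s_{-i})\geq\pi_i(s')$ and $\pi_{-i}(s_i',s_{-i})\leq\pi_{-i}(s)$. Suppose toward a contradiction that $\pi_i(s')>\pi_i(s)$, and let player $i$ deviate to $f_i':=f_i^{[\pi_i(s'),\infty)}$, which clusters together all payoffs at least $\pi_i(s')$. Under $f_i'$, the profile $s_i'$ is a clustered best reply to $s_{-i}'$ (both $\pi_i(s')$ and $BRP_i(s_{-i}')\geq\pi_i(s')$ fall in the top cluster), and player $i$'s move from $s_i$ to $s_i'$ against the unchanged $s_{-i}$ strictly raises her clustered payoff, since $\pi_i(s)$ lies strictly below the cluster while $\pi_i(s_i',s_{-i})\geq\pi_i(s')$ lies inside it. I would then close with the short improvement path $s\to(s_i',s_{-i})\to s'$: the opponent's second step is legitimate under hypothesis (a), because $\pi_{-i}(s')\geq\pi_{-i}(s)\geq\pi_{-i}(s_i',s_{-i})$, or under hypothesis (b), because $s_{-i}'\in BR_{-i}(s_i')$. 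In the boundary case where the opponent is only clustered-\emph{indifferent} to that step, the intermediate profile $(s_i',s_{-i})$ is itself an equilibrium with $\pi_i\geq\pi_i(s')>\pi_i(s)$; either way the resulting plausible equilibrium contradicts $(f,s)$ being a CUE.

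For statement 1 I would again contrapose: suppose $s_i\prec_{-i}BR_i(s_{-i})$ for some player $i$ (the only alternative, given $s_i\notin BR_i(s_{-i})$), and let player $i$ deviate to the identity clustering $f_i^\emptyset$, under which she best-responds with her true utility. Writing $\hat s_i$ for an element of $BR_i(s_{-i})$, the move $s_i\to\hat s_i$ is simultaneously a strict improvement for player $i$ (to $BRP_i(s_{-i})>\pi_i(s)$) and externalities-higher for the opponent ($\hat s_i\succ_{-i}s_i$ is exactly the contraposed hypothesis), so it strictly raises both $\pi_i$ and $\pi_{-i}$. If $s_{-i}$ remains a clustered best reply to $\hat s_i$, then $(\hat s_i,s_{-i})$ is already a plausible equilibrium paying player $i$ strictly more than $\pi_i(s)$, a contradiction. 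Otherwise the opponent reacts, and I would invoke the monotonicity lemma: her reaction raises $\pi_{-i}$ and, having already pushed her payoff above its CUE level, tends to move her out of the clustered region in which $s_{-i}$ was being held, so that continuing the best-reply improvement path converges to a plausible equilibrium at which player $i$ best-responds; the aim is to show her payoff there still exceeds $\pi_i(s)$.

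The main obstacle is precisely this ``opponent reacts'' branch of statement 1. Under strategic complements the opponent's reaction is externalities-higher for player $i$ and reinforces her gain, so the argument is easy; but under strategic substitutes the opponent moves in the externalities-lower direction and can erode player $i$'s payoff, and showing that this erosion cannot overturn the gain is the crux. This is exactly where monotone externalities must substitute for the strategic complementarity assumed in the related literature. Concretely, I expect the delicate step to be proving that the best-reply improvement path started from $(\hat s_i,s_{-i})$ converges to a genuine equilibrium whose opponent-component is favorable enough to player $i$ --- or at which the opponent's clustering no longer binds --- so that player $i$'s best-reply payoff still strictly exceeds $\pi_i(s)$. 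Managing the interplay between the opponent's clustering binding and unbinding along the path, and guaranteeing convergence of the path to an equilibrium using continuity, weak concavity, and the interiority of $s$, is the part I expect to require the most care and to occupy the bulk of the formal argument.
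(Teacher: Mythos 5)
Your Part (2) follows essentially the paper's own argument (player $i$ deviates to a clustering that lumps together all payoffs above the target level, followed by the two-step improvement path $s\to(s_i',s_{-i})\to s'$), and is fine modulo one loose end: in case (a), ``clustered-indifferent to that step'' does not by itself make $(s_i',s_{-i})$ an equilibrium, since some third strategy of player $-i$ might still strictly improve on $s_{-i}$ against $s_i'$. The paper closes this by the chain $f_{-i}\left(BRP_{-i}\left(s_{i}\right)\right)\geq f_{-i}\left(BRP_{-i}\left(s'_{i}\right)\right)\geq f_{-i}\left(\pi_{-i}\left(s'\right)\right)\geq f_{-i}\left(\pi_{-i}\left(s\right)\right)$ together with $s\in NE\left(G_{f}\right)$, which forces all terms to be equal; hence $s'_{-i}$ attains the maximal clustered payoff against $s_i'$, and either $s_{-i}$ also attains it (so $(s_i',s_{-i})$ is already a plausible equilibrium) or the move to $s'_{-i}$ is strictly improving (so $s'$ is one).

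The genuine gap is Part (1), and it sits exactly where you say you are stuck. You have player $i$ herself deviate to $f_i^{\emptyset}$ and move toward $BR_i(s_{-i})$, and then you must control the opponent's reaction; under strategic substitutes that reaction is externalities-lower for player $i$ and can erode her gain, and monotone externalities alone give no bound on the erosion. This route cannot be completed in general: Theorem \ref{thm-substitues} shows that playing externalities-lower than one's best reply \emph{is} sustainable in a CUE when the opponent best-replies, so any proof of Part (1) must exploit the hypothesis that \emph{neither} player best-replies, which your deviation never uses. The paper's proof reverses the roles: the \emph{opponent}, player $-i$, deviates. Since $s_{-i}\notin BR_{-i}(s_i)$, she can pick a nearby $s'_{-i}$ strictly closer to $BR_{-i}(s_i)$, cluster all payoffs above $\pi_{-i}(s'_{-i},s_i)$, and move there, strictly gaining by weak concavity. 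The only possible continuation of the improvement path is a reaction by player $i$ that strictly raises her unclustered payoff against $s'_{-i}$; by weak concavity the relevant superlevel set is convex and contains $BR_i(s'_{-i})$, so (taking $s'_{-i}$ close enough to $s_{-i}$ that $s_i\prec_{-i}BR_i(s'_{-i})$ still holds) any such reaction is externalities-higher than $s_i$ --- here the contradiction hypothesis $s_i\prec_{-i}BR_i(s_{-i})$ is precisely what pins down the direction --- and therefore can only help player $-i$ further. The path terminates after these two steps in a plausible equilibrium paying player $-i$ strictly more than $\pi_{-i}(s)$, contradicting the CUE property. The ``crux'' you identify simply does not arise on this route.
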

\begin{proof}{[Sketch]}
For part (1), assume to the contrary that 
$s_{i}\prec_{-i}BR_{i}\left(s_{-i}\right)$. Consider a  sufficiently small deviation of
player $-i$ toward her unclustered best reply (which can be implemented by slightly altering her
clustering). Any payoff-improving reaction of player $i$ must be
toward player $i$'s best reply, which is the direction that is
beneficial to player $-i$ due to monotone externalities. Thus, player $-i$ gains from  the deviation
and $s$ cannot be a CUE outcome. This proves part (1). In order to
prove part (2),  
assume to the contrary that there exists a strategy profile
$s'$ that violates condition (2).  One can show that player $i$ can
change her clustering and improve her clustered payoff by changing
her strategy to $s'_{i}$; this is followed by at most one
additional stage in the improvement path resulting in the plausible
equilibrium $s'$, which violates $s$ being a CUE outcome.\\ 
The details of the proof are in Appendix 
\ref{proof-montotone-externalities}. 
\end{proof}
\begin{rem} \label{rem-theorem-1} Theorem
  \ref{Thm:neccesary-conditions-monotone} shows that CUE yields
    results that are qualitatively different  from most existing related
    solution concepts (e.g., clustered preferences
  (\citealp{heifetz2007dynamic}), delegation
  (\citealp{fershtman1987equilibrium}), biased beliefs
  (\citealp{heller2020biased}), and naive analytics
  (\citealp{berman2021naive})). All these existing solution
    concepts predict that players treat their opponents worse than
    they would using their unclustered best replies in games with strategic
    substitutes. By contrast, we have the opposite prediction in all
    CUEs in which neither player plays her unclustered best reply. The
    key difference between our result and theirs is induced by two
    novel aspects of our solution concept:  
  \begin{enumerate}
    \item  CUE allows the subjective (clustered) payoffs to differ from the
      material (unclustered) ones only by clustering 
      payoffs in an interval. This implies
      that the direction that improves one's subjective payoffs is the
      same direction that improves her material payoffs, which  is the
      driving force behind Theorem
      \ref{Thm:neccesary-conditions-monotone}. By contrast, the
        existing solution concepts allow the subjective preferences
        to substantially differ from the material ones, which allows
        an agent's deviation to increase her subjective payoff while
        decreasing her material payoff. 
    \item Most other solution concepts imply that a player with
      strictly concave material payoffs have a unique subjective best
      reply, which is a key argument in ruling out equilibrium
      behavior in which players treat their opponents better than
         they would using their material best replies in games with strategic
            substitutes. By contrast, CUE induces  players to be
      indifferent between an interval of subjective best-reply
      strategies, which allows the players to treat their       opponents better than they would using their unclustered best
            replies.  
\end{enumerate}

\end{rem}

\subsection{Games with Strategic Complements}
In this subsection, we study  games with strategic complements,  
and show that for these games, the three conditions
of Theorem \ref{Thm:neccesary-conditions-monotone} fully characterize
CUE outcomes; that is, they provide both necessary and sufficient
conditions for CUE outcomes.

A game $G=\left(S,\pi\right)$ \emph{has strategic
complements} if $\frac{\partial\pi_{i}\left(s\right)}{\partial s_{i}}$
is strictly increasing in $s_{j}$ for each player $i$ and each strategy
$s_{i}$.  Games with strategic complements are common in the economic
literature, and include, in particular, price competition with differentiated
goods (Example \ref{exam-Price-competition}).
It is well known that every game $G$ with strategic complements and
monotone externalities admits a \emph{worst Nash equilibrium}
$s^{WNE}\in NE\left(G\right)$,  
in which all players play their externalities-lowest equilibrium startegies, i.e., $s_{i}^{WNE}\preceq_{-i}s_{i}^{NE}$ for every Nash
equilibrium $s^{NE}\in NE\left(G\right)$ and player $i\in I$
(see, e.g., \citealp{milgrom1990rationalizability}).
It is well-known that the worst Nash equilibrium is Pareto dominated by all other Nash equilibria of the game.

Theorem \ref{thm-complements} characterizes the set of CUE outcomes
in monotone games with strategic complements. 
It shows that the two necessary conditions for being a CUE outcome in an internal game with monotone externalities in Theorem \ref{Thm:neccesary-conditions-monotone} are also sufficient conditions if the game has strategic complements. 
This characterization implies, in particular, that all CUE outcomes
have externalities-higher strategies and higher payoffs relative to
the worst Nash equilibrium. 

\begin{thm}
\label{thm-complements}Let $G$
be an interval game with monotone externalities and strategic complements. Let $s$ be an interior
strategy profile. Then $s$ is a CUE outcome iff:
\begin{enumerate}
\item $s_{i}\succeq_{-i}BR_{i}\left(s_{-i}\right)$
for each player $i$;
\item if $s'_{i}\preceq_{-i}s_{i}$, $s'_{-i}\preceq_{i}s_{-i}$, and either (a) $\pi_{-i}\left(s'\right)\geq\pi_{-i}\left(s\right)$ or (b) $s'_{-i}\in BR_{-i}\left(s_{i}'\right)$, then $\pi_{i}\left(s'\right)\leq\pi_{i}\left(s\right)$.
\end{enumerate}
Moreover, profile $s$ has externalities-higher strategies
and higher payoffs than the worst Nash equilibrium
(i.e., $s_{i}\succeq_{-i}s_{i}^{WNE}$ and $\pi_{i}\left(s\right)\succeq_{-i}\pi_{i}\left(s^{WNE}\right)$
 $\forall i$).

\end{thm}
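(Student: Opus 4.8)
The statement is an ``if and only if,'' so the plan is to prove the two directions separately and then obtain the ``moreover'' clause as a corollary of the two conditions. Necessity is largely inherited from the earlier results, whereas the sufficiency direction — exhibiting a clustering profile that supports $s$ — together with the comparison to the worst Nash equilibrium, is where the genuinely new work lies. For necessity, suppose $s$ is an interior CUE outcome. If neither player best replies, then Theorem \ref{Thm:neccesary-conditions-monotone} already gives the strict form $s_i\succ_{-i}BR_i(s_{-i})$ (hence the weak condition (1)) and condition (2) verbatim. The remaining cases are the boundary configurations in which at least one player best replies: when player $i$ best replies, the weak inequality $s_i\succeq_{-i}BR_i(s_{-i})$ holds trivially, and the only delicate case is when \emph{exactly} one player best replies, where the perturbation used in Theorem \ref{Thm:neccesary-conditions-monotone} (nudging the opponent toward her best reply) cannot be applied to the best-replying opponent and must be replaced by a direct clustering-deviation argument; this configuration is a Stackelberg-type CUE (cf.\ Proposition \ref{prop-Stackelberg}). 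Condition (2) itself does not use that either player fails to best reply, so the improvement-path argument behind part (2) of Theorem \ref{Thm:neccesary-conditions-monotone} carries over to every interior CUE outcome and delivers condition (2) here.

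For the converse, assume (1) and (2) and build the supporting clustering by letting each player cluster all payoffs at or above her target payoff, $f_i:=f_i^{[\pi_i(s),\infty)}$. First I would check $s\in NE(G_f)$. Since $s_i\succeq_{-i}BR_i(s_{-i})$ and $\pi_i$ is weakly concave in $s_i$, moving $s_i$ away from the best reply in the externalities-higher direction weakly lowers player $i$'s payoff, so $\pi_i(s)\le BRP_i(s_{-i})$; hence $\pi_i(s)$ and the maximal attainable payoff $BRP_i(s_{-i})$ both lie in $[\pi_i(s),\infty)$ and are clustered to a common value, while no strategy can exceed $BRP_i(s_{-i})$. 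Thus $s_i$ attains the top clustered value and is a clustered best reply, establishing condition (1) of CUE.

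It remains to verify CUE condition (2): after any deviation of player $i$ to a clustering $f'_i$, she is weakly outperformed relative to $\pi_i(s)$ in every plausible equilibrium $s'$ of $G_{(f'_i,f_{-i})}$. The key observation about the opponent's fixed clustering $f_{-i}^{[\pi_{-i}(s),\infty)}$ is a dichotomy at any equilibrium $s'$ of the deviation game: either $\pi_{-i}(s')\ge\pi_{-i}(s)$, or $\pi_{-i}(s')<\pi_{-i}(s)$ and then $f_{-i}$ acts as the identity near $-i$'s realized payoff, so her clustered best reply is an unclustered best reply, i.e.\ $s'_{-i}\in BR_{-i}(s'_i)$. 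These are exactly hypotheses (a) and (b) of condition (2). Consequently, \emph{provided} $s'$ is externalities-lower than $s$ (so that $s'_i\preceq_{-i}s_i$ and $s'_{-i}\preceq_i s_{-i}$), condition (2) gives $\pi_i(s')\le\pi_i(s)$, as required. The hard part will therefore be to show that every plausible equilibrium reachable by an improvement path from $s$ is externalities-lower than $s$, and this is where strategic complementarity is used: a deviation that treats $-i$ more favorably leaves $-i$ indifferent (she remains in her clustered region) and only pushes player $i$ farther from her own best reply, so it cannot help $i$; the only candidate profitable deviations move $i$ in the externalities-lower direction, dropping $-i$'s payoff below her threshold and triggering a reaction. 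Because strategic complements make the best-reply map monotone in the externalities order, that reaction is again externalities-lower, so the improvement path stays monotone and converges to an externalities-lower equilibrium. Making this monotonicity precise for an \emph{arbitrary} deviation clustering $f'_i$ and an arbitrary improvement path is the technical heart of the proof.

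Finally, the ``moreover'' claim follows from the two conditions. Since the best-reply correspondence is monotone in the externalities order under strategic complements and condition (1) says $s$ lies (weakly) above its own best-reply image, a Tarski-style monotone-comparative-statics argument places the externalities-lowest fixed point $s^{WNE}$ below $s$, giving $s_i\succeq_{-i}s_i^{WNE}$ for each $i$. For the payoff comparison I would simply apply condition (2) to the profile $s'=s^{WNE}$: it is externalities-lower than $s$ by the previous step, and being a Nash equilibrium it satisfies $s_{-i}^{WNE}\in BR_{-i}(s_i^{WNE})$, i.e.\ hypothesis (b); condition (2) then yields $\pi_i(s^{WNE})\le\pi_i(s)$ for each player, completing the argument.
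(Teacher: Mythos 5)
Your overall architecture matches the paper's: the same supporting clustering $f_{i}^{\geq\pi_{i}\left(s\right)}$ for sufficiency, the same dichotomy for the non-deviating player (either she remains in her top cluster, so hypothesis (a) of condition (2) applies, or her clustered best reply is an unclustered best reply, so hypothesis (b) applies), the same use of complementarity to force improvement paths into the externalities-lower region, and the same restricted-game/monotone-comparative-statics argument for $s_{i}\succeq_{-i}s_{i}^{WNE}$. There is, however, one genuine gap in your necessity direction: the case where exactly one player, say player $2$, plays her unclustered best reply is not handled by Theorem \ref{Thm:neccesary-conditions-monotone}, and your appeal to Proposition \ref{prop-Stackelberg} points the wrong way (that proposition shows Stackelberg equilibria \emph{are} CUEs; it gives no necessary condition on such CUEs). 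This is exactly the step where strategic complementarity is load-bearing: if $s_{1}\prec_{2}BR_{1}\left(s_{2}\right)$, player $1$ deviates to the clustering $f_{1}^{\geq\pi_{1}\left(s'_{1},s_{2}\right)}$ with $s'_{1}\in BR_{1}\left(s_{2}\right)$, moves to $s'_{1}$ (a strict unclustered improvement), and player $2$'s subsequent reaction $s'_{2}\in BR_{2}\left(s'_{1}\right)$ is externalities-\emph{higher} by complementarity, so player $1$ strictly gains in the resulting plausible equilibrium, contradicting CUE. Under strategic substitutes the reaction goes the opposite way and the conclusion reverses (Theorem \ref{thm-substitues}), so this step cannot be waved at as a routine ``direct clustering-deviation argument.''

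On the other hand, your proof of the payoff half of the ``moreover'' clause is genuinely different from, and cleaner than, the paper's: you apply condition (2) directly to $s'=s^{WNE}$, which is externalities-lower by the first half of the claim and satisfies hypothesis (b) because it is a Nash equilibrium, yielding $\pi_{i}\left(s^{WNE}\right)\leq\pi_{i}\left(s\right)$ for each player immediately. The paper instead has player $1$ deviate to $f_{1}^{\emptyset}$, constructs a monotone improvement path converging to a plausible equilibrium $s'$ with $s'_{1}\in BR_{1}\left(s'_{2}\right)$ and $s'_{2}\succeq_{1}s_{2}^{WNE}$, and then chains inequalities. Your route bypasses that construction entirely and is available in both directions of the iff, since conditions (1)--(2) hold either by hypothesis or as the already-established necessary conditions. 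One further caveat: in the sufficiency direction you explicitly defer what you call the ``technical heart'' --- that every improvement path from $s$ under an arbitrary deviation clustering stays externalities-lower than $s$ --- which is acceptable in a sketch, but note that the argument needs both concavity (to pin down the deviator's first move) and complementarity (to propagate the direction through later stages), and a move that overshoots the peak of the concave payoff needs to be explicitly ruled out or accommodated.
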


\begin{proof}{[Sketch]}
For the ``if'' direction,
we show that $s$ can be supported as a CUE
outcome if each player clusters all payoffs above her payoff in
$s$ (i.e., players satisfice with an aspiration level
equal to the equilibrium payoff). Because the game has strategic
complements, condition (1) 
implies that all the stages in an improvement path must be in the
externalities-lower direction. Given the players' clustering, the
improvement path must end in either (a) a Pareto-dominant profile, or (b) a profile in which the non-deviating player plays her
unclustered best reply. Condition (2) implies that the deviator cannot
gain in either of these cases.
For the ``only if'' direction, it is relatively simple to show that
strategic complements allow us to extend the argument of Theorem
\ref{Thm:neccesary-conditions-monotone} to cases in which one of the
players plays her unclustered best reply. 

For the claim in the final sentence of the theorem statement, note
that 
the inequality $s_{i}\succeq_{-i}BR_{i}(s_{-i})$
implies that $s_{i}\succeq_{-i}s_{i}^{WNE}$ for each player $i$
by a standard property of games with strategic complements (proved
in Lemma \ref{Lemma-complements-standard}).  It remains 
to show that $\pi_{i}\left(s\right)\geq\pi_{i}\left(s^{WNE}\right)$
for each player $i$. Assume to the contrary that  $\pi_{1}\left(s\right)<\pi_{1}\left(s^{WNE}\right).$
We consider that  a deviation by player 1 to $f_1^\emptyset$ (not clustering any
payoffs together), followed by an improvement path in which
the players sequentially decrease their strategies into best replying
until they converging to a plausible equilibrium. One can show that
strategic complements imply that player 1 obtains a payoff of at least
$\pi _1(s^{WNE})> \pi_1(s)$ in this plausible equilibrium, which
contradicts  the fact that $s$ is a CUE outcome. \\
The details of the proof are in Appendix 
\ref{proof-complements}. 
\end{proof}

Next, we apply Theorem \ref{thm-complements} to
price competition with differentiated goods (the linear city model
$\grave{\textrm{a}}$ la Hotelling). Specifically, we show that in all CUE outcomes both players  choose prices and obtain payoffs at least as high as in the unique Nash equilibrium.

\begin{example}[Price competition
with differentiated goods; adapted from see the textbook
analysis in \citealp
{mas1995microeconomic}, Section 12.C.]\label{exam-Price-competition}
Consider a mass one of consumers uniformly distributed in the interval $\left[0,1\right]$.
Consider two firms that produce widgets, located at the two extreme
locations: 0 and 1. Every consumer wants at most one widget. Producing
a widget has a constant marginal cost, which we normalize to be zero.
Each firm $i$ chooses price $s_{i}\in\left[0,M\right]$ for its
widgets. The total cost of buying a widget from firm
$i$ is equal to its price $s_{i}$ plus $t$ times the consumer's
distance from the firm, where $t\in\left(0,M\right)$. Each buyer buys a widget from the firm with
the lower total buying cost. This implies that the total demand for
widget $i$ is given by function $q_{i}\left(s_{i},s_{-i}\right)$,
where 
\[
q_{i}\left(s_{i},s_{-i}\right)=\begin{cases}
\begin{array}{cc}
0 & \mbox{ if }\frac{s_{-i}-s_{i}+t}{2\cdot t}<0,\\
\frac{s_{-i}-s_{i}+t}{2\cdot t} & \mbox{ if }0<\frac{s_{-i}-s_{i}+t}{2\cdot t}<1,\\
1 & \mbox{ if }\frac{s_{-i}-s_{i}+t}{2\cdot t}>1.
\end{array}\end{cases}
\]
The payoff (profit) of firm $i$ is given by
$\pi_{i}\left(s\right)=s_{i}\cdot q_{i}\left(s\right)$. 
Observe that no strategy profile $s$ is Pareto dominated by a lower
profile $s'<s$. This is because $s'_{i}\cdot
q_{i}\left(s'\right)=\pi_{i}\left(s'\right)\geq\pi_{i}\left(s\right)=s_{i}\cdot
q_{i}\left(s\right)\Rightarrow
q_{i}\left(s'\right)>q_{i}\left(s\right)$. 
Thus, if $s'$ Pareto dominates $s$, then 
$q_{i}\left(s'\right)>q_{i}\left(s\right)$ 
and $q_{-i}\left(s'\right)>q_{-i}\left(s\right)$, a
contradiction. 

It is well-known that the game has strategic complements, and that
each player has a unique best reply for each opponent's strategy,
which is given by $BR_{i}\left(s_{-i}\right)=\max(\frac{s_{-i}+t}{2},s_{-i}-t)$. 
This implies that  both players play weakly above
their unclustered best-replies iff $s_{1}\in\left[\frac{s_{2}+t}{2},2s_{2}-t\right]$
(which implies, in particular, that $s_{1},s_{2}\geq t$).

Observe that the payoff of player $i$ when playing strategy $s_{i}\leq3t$
and facing a best-replying opponent is given by $\pi_{i}\left(s_{i},BR_{i}\left(s_{i}\right)\right)=\frac{s_{i}\left(1.5t-0.5s_{i}\right)}{2\cdot t}.$
Thus, $\pi_{i}\left(s_{i},BR_{i}\left(s_{i}\right)\right)$ is a strictly
concave function of $s_{i}$ with a unique maximum at $s_{i}=1.5t$.
This implies that profile $s$ is robust to (externalities-)lower
Stackelberg-leaders iff, for each player $i$, either
$s_{i}\geq1.5t\,\,\textrm{and}\,\,\pi_{i}\left(s\right)\geq\frac{9}{16}t$, 
or $s_{i}\leq\min\left(1.5t,2s_{-i}-t\right)$.  By combining these
inequalities with Theorem \ref{thm-complements}, we get that a strategy
profile $s$ is a CUE outcome iff
\begin{enumerate}
\item Each strategy is higher than the unclustered best reply: $s_{1}\in\left[\frac{s_{2}+t}{2},2s_{2}-t\right]$;
and
\item if $s_{i}\geq1.5t$, then player $i$'s unclustered payoff is further
required to be above $\frac{9}{16}t$ (her payoff if she were a Stackelberg
leader).

\end{enumerate}

Figure \ref{fig:The-Set-of-differiantaed-goods} demonstrates the
set of CUE outcomes for $t=1$ and $M=3$. In all the CUE both players set higher prices and obtain higher payoffs than in the Nash equilibrium.

\begin{figure}[ht]
\centering{}\caption{\label{fig:The-Set-of-differiantaed-goods}The Set of CUE Outcomes
in Example \ref{exam-Price-competition} ($t=1$, $M=3$)}
\includegraphics[scale=0.52]{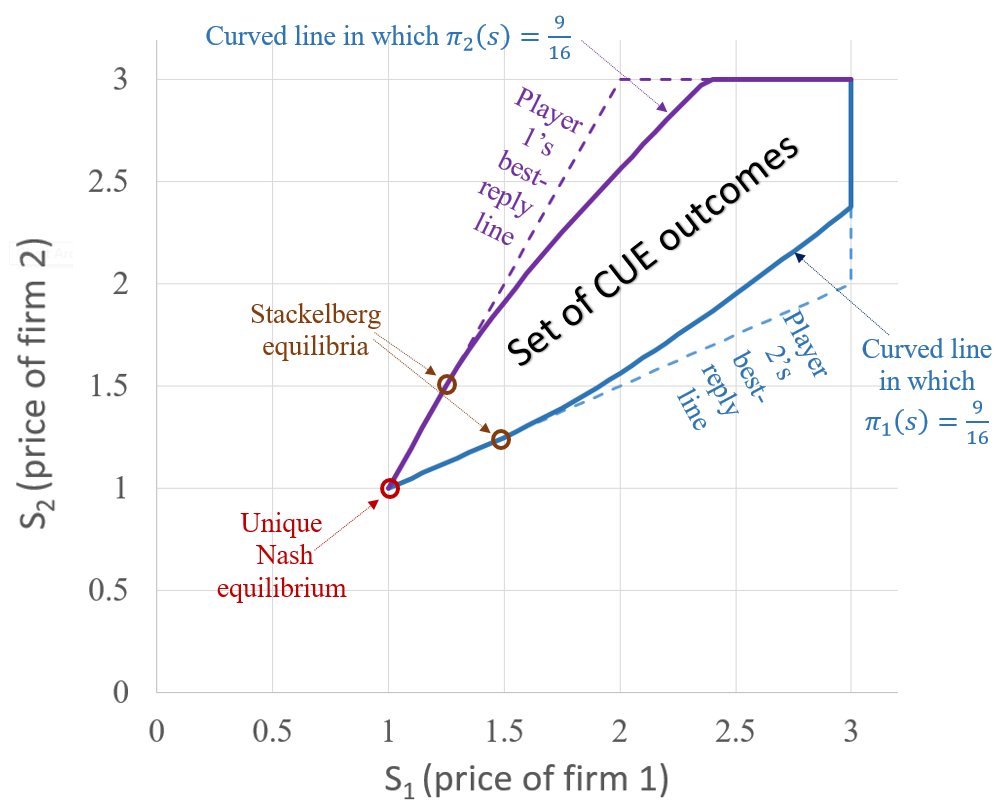}
\end{figure}
\end{example}

\subsection{Games with Strategic Substitutes\label{sec:Wishful-thinking-and-substitute}}
The set of CUE outcomes can be divided to two disjoint classes: (1)
CUE outcomes in which at-least one of the players plays her unclustered
best reply, and (2) CUE outcomes in which neither player plays her
unclustered best reply. Theorem \ref{thm-complements} shows that in games
with strategic complements, both classes induce similar behavior,
which deviates from Nash behavior in the direction that is beneficial
to the players. We now show that the two classes induce
qualitatively different behaviors in games with strategic
substitutes. Theorem \ref{Thm:neccesary-conditions-monotone} implies
that in the second class of CUE outcomes both players deviate from
unclustered best reply in the direction that is beneficial to the
opponent. 

By contrast, Theorem \ref{thm-substitues} shows that in the 
first class of CUE outcomes (in which at least one of the players
plays her unclustered best reply), the non-best-replying player $i$
deviates from her unclustered best reply in the direction that is harmful
to the opponent. 
Hence the player that does not best reply uses the clustering as a
threat rather than a commitment for a favorable action.  Moreover,
under the additional mild assumption of the 
payoff function being strictly (rather than only weakly) concave,
there exists a Nash equilibrium in which  player $i$'s CUE strategy is
externalities-lower than her Nash equilibrium strategy (while the
opposite holds for player $-i$). 
\begin{thm}\label{thm-substitues}
Let $G$ be an interval game with monotone externalities and strategic substitutes. Let $s$ be an interior CUE outcome. Assume that $s_{-i}\in BR_{-i}(s_i)$. Then:
\begin{enumerate}
    \item 
    $s_i \preceq_{-i} BR_i(s_{-i}) $
    \item If the payoff function is strictly concave in the player's own strategy, then there exists a Nash equilibrium  $s^{NE}\in NE(G)$, such that $s_i \preceq s^{NE}_i$ and   $s_{-i} \succeq s^{NE}_{-i}.$
\end{enumerate}
\end{thm}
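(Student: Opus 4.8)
The plan is to prove both parts by combining two facts that were already the engine of Theorem~\ref{Thm:neccesary-conditions-monotone}. First, since every clustering $f_j$ is weakly increasing, any move that \emph{strictly} raises a player's clustered payoff $f_j\circ\pi_j$ also strictly raises her unclustered payoff $\pi_j$, and therefore points toward her unclustered best reply; moreover $f_j\circ\pi_j$ attains its maximum exactly where $\pi_j$ does, i.e.\ at $BR_j$. Second, under strategic substitutes each best-reply correspondence is weakly decreasing in the natural order, which under monotone externalities amounts to the order-reversing property $x\succeq_{-i}x'\Rightarrow BR_{-i}(x)\preceq_{i}BR_{-i}(x')$ (and symmetrically for $BR_i$), so an alternating best-reply process is monotone and hence convergent. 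Throughout I may take $s_i\notin BR_i(s_{-i})$, since otherwise $s$ is a Nash equilibrium and the relation in part~(1) holds trivially; I fix a clustering profile $f$ with $(f,s)$ a CUE.

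For part~(1) I argue by contradiction, assuming that player~$i$'s strategy is externalities-\emph{higher} (more favorable to $-i$) than her best reply, i.e.\ $s_i\succ_{-i}BR_i(s_{-i})$. I let player~$i$ deviate to the identity clustering $f_i^{\emptyset}$, so her clustered payoff is just $\pi_i$, and I exhibit a plausible equilibrium of $G_{(f_i^{\emptyset},f_{-i})}$ in the sense of Definition~\ref{def:plausible} at which she does strictly better than $\pi_i(s)$. The improvement path starts at $s$; since $s_i\notin BR_i(s_{-i})$, player~$i$ strictly gains by moving to $BR_i(s_{-i})$, already giving $BRP_i(s_{-i})>\pi_i(s)$. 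I then alternate: whenever the current profile is not a clustered equilibrium, the player who can strictly improve moves to her unclustered best reply. For player $-i$ such a move is a \emph{strict} clustered improvement precisely when she is not already at a clustered best reply (by the first guiding fact, since $f_{-i}\circ\pi_{-i}$ is maximized at $BR_{-i}$); if it is not strict, the path legitimately halts at a clustered equilibrium. By the order-reversing property, $i$'s strategies move monotonically in the externalities-lower direction while $-i$'s move monotonically in the direction that is externalities-higher \emph{for $i$}; boundedness of the interval forces convergence, and closedness of the best-reply correspondences makes the limit $s^{\ast}$ a mutual best reply, hence a Nash equilibrium of $G$ and, by Proposition~\ref{fact1:any-NE-is-NE-clustered}, a plausible equilibrium of $G_{(f_i^{\emptyset},f_{-i})}$. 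Because $s_{-i}^{\ast}\succeq_i s_{-i}$ and $BRP_i$ is monotone in the $\succeq_i$-order of the opponent's strategy, $\pi_i(s^{\ast})=BRP_i(s_{-i}^{\ast})\ge BRP_i(s_{-i})>\pi_i(s)$ (the early-halt case gives the same bound), contradicting Definition~\ref{def-CUE}. Hence $s_i\preceq_{-i}BR_i(s_{-i})$.

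For part~(2) I assume strict concavity, so $BR_i,BR_{-i}$ are single-valued and continuous, and I study the map $\psi:=BR_i\circ BR_{-i}$ on $S_i$, which is continuous and, by strategic substitutes, weakly increasing; its fixed points $x$ correspond to Nash equilibria $(x,BR_{-i}(x))$ of $G$. Evaluating at the CUE strategy, $\psi(s_i)=BR_i(BR_{-i}(s_i))=BR_i(s_{-i})$ since $s_{-i}=BR_{-i}(s_i)$, and part~(1) yields $\psi(s_i)=BR_i(s_{-i})\succeq_{-i}s_i$. Because $\psi$ is monotone, the iterates $\psi^{k}(s_i)$ then move monotonically (in the natural order) on the externalities-higher side of $s_i$ and converge to a fixed point $x^{\ast}$ with $x^{\ast}\succeq_{-i}s_i$. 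Setting $s^{NE}:=(x^{\ast},BR_{-i}(x^{\ast}))$ gives a Nash equilibrium with $s_i\preceq_{-i}s_i^{NE}$; applying the order-reversing property to $x^{\ast}\succeq_{-i}s_i$ gives $s_{-i}^{NE}=BR_{-i}(x^{\ast})\preceq_i BR_{-i}(s_i)=s_{-i}$, i.e.\ $s_{-i}\succeq_i s_{-i}^{NE}$, which is the second inequality.

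The main obstacle is the construction in part~(1): the alternating process must be \emph{simultaneously} a legitimate improvement path, each of whose steps strictly raises the mover's clustered payoff despite player~$-i$'s possibly nontrivial clustering $f_{-i}$, and monotone enough to converge to a genuine equilibrium of the deviated clustered game. The two guiding facts are exactly what reconcile these demands: monotonicity of the clusterings forces every admissible step toward the mover's unclustered best reply (so the only way a step fails to be strict is that the profile is already a clustered equilibrium, at which point halting is legitimate), while strategic substitutes makes these best-reply steps monotone in the natural order and guarantees that the opponent's motion is throughout beneficial to~$i$. The supporting routine facts—that $f_{-i}\circ\pi_{-i}$ is maximized at $BR_{-i}$, that $BRP_i$ is monotone in $s_{-i}$ under the relevant externality sign, and the order-reversing property of the best replies under substitutes—are the small lemmas I would verify first and then feed into both parts.
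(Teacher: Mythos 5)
Your proof is correct, but both halves take routes that genuinely differ from the paper's. For part (1) the paper has player $i$ deviate to the interval clustering $f_i^{\geq \pi'_i}$ with $\pi'_i = BRP_i(s_{-i})$: after player $i$ moves to her unclustered best reply and player $-i$ responds once (being pushed, by strategic substitutability, in the direction that is externalities-higher for $i$), player $i$'s payoff lands in the top cluster, so she is automatically at a clustered best reply and the improvement path terminates in two steps at a plausible equilibrium. Your deviation to the identity clustering $f_i^{\emptyset}$ forces you instead to run a possibly infinite alternating best-reply path and to invoke monotonicity (via the order-reversing property of best replies under substitutes) together with closedness of the best-reply graph to get convergence to a Nash equilibrium of $G$, hence a plausible equilibrium of the deviated game. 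This works, and your handling of the early-halt case and of the strict-improvement requirement for player $-i$'s steps is sound; it is simply more machinery than the paper needs, since the interval-clustering device is precisely what truncates the path. For part (2) the paper applies Kakutani's theorem to an auxiliary game in which player $i$ is restricted to the externalities-higher side of $s_i$ and rules out a binding constraint using part (1) and strict concavity; your monotone iteration of $\psi = BR_i\circ BR_{-i}$ is a constructive alternative that delivers the required Nash equilibrium together with the ordering $x^{*}\succeq_{-i}s_i$ in one stroke, at the cost of needing continuity of the single-valued best replies (supplied by strict concavity and the maximum theorem). One small wording slip: $f_{-i}\circ\pi_{-i}$ is maximized \emph{at} the unclustered best reply but generally not \emph{only} there; you use only the correct direction of that claim, so nothing breaks.
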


\begin{proof} {[Sketch]}
\begin{enumerate}
    \item Assume to the contrary that $s_i\succ BR_i(s_{-i})$. Let
      $s'_i \prec s_i$ be a nearby externality-lower strategy. Player
      $i$ can increase her unclustered payoff by deviating to
      clustering all payoffs above $\pi_i(s'_i,s_{-i})$. This
      deviation induces an improvement path in which player $i$
      increases her unclustered payoff by  changing her
      strategy to $s'_i$. Since the game has strategic substitutes, an
      opponent's reaction must be in the externality-higher direction,
      which further increases player $i$'s payoff. 
    
    \item Consider an auxiliary game in which player $i$ is restricted
      to choosing strategies that are weakly externalities-lower than
      $s^i$. It is straightforward to show that the restricted game
      admits a Nash equilibrium $s^{NE}$, and that
      since the game has strategic complements and the unclustered
      utilities are strictly concave, the profile $s^{NE}$
      has to be a Nash equilibrium of the original game, and that it
      must satisfy $s_i \succeq BR_{-i}(s^NE_{-i})$ and $s_i\preceq_i
      s^{RE}_i.$
\end{enumerate}
See Appendix~\ref{proof-substitutes} for details. 
\end{proof}
Thus, the qualitative predictions are different in the two classes of
CUE outcomes. In the first class, both players deviate from unclustered
best replying in the direction that is beneficial to the opponent. in
the second class, only one of the players deviate from unclustered best
reply, and it does so in the direction that is harmful to the
opponent. Taken together, Theorems
\ref{Thm:neccesary-conditions-monotone}--\ref{thm-substitues} imply
that CUE predicts cooperative outcomes in which players treat each
other better than the unclustered best replies in games with strategic
complements, while the prediction for games with strategic 
substitutes is ambiguous, and depends on whether one or both players cluster
payoffs together. Experimental evidence  supporting our theoretical
predictions is presented 
by \citet{potters2009cooperation}, 
{who show that there is significantly
more
cooperation in two-player interval games with strategic complements
than in games with strategic substitutes, and  
by} \citet{suetens2007bertrand}, {who 
present similar results for oligopoly experiments.}
Our results are demonstrated in the following example of Cournot competition. 

\begin{example}
  \label{exam-Cournot} 
   We expand on the discussion of the
  symmetric Cournot game presented in Examples
    \ref{exm-weak-is-too-weak}--\ref{exam-Stackelberg1}.  Recall that
  $S_{i}=\left[0,1\right]$,  
$S_{i}=\left[0,1\right]$,
    $\pi_{i}\left(s_{i},s_{j}\right)=s_{i}\cdot\left(1-s_{i}-s_{-i}\right)$,
    the best-reply function of each player $i$ is given by
    $BR_i(s_{-i})=\frac{1-s_{-i}}{2},$ and that the payoff of player $i$ when
choosing quantity $s_i$ and facing a best-replying opponent is
$\frac{s_i\cdot (1-s_i)}{2},$ which has a unique maximal payoff of 0.5
obtained by choosing the Stackelberg-leader quantity $s_i=0.5$.  

We begin by characterizing the CUE in which one of the players (player
$-i$) plays her unclustered best reply. Theorem \ref{thm-substitues}
implies that $s_i\geq \frac{1}{3}.$ Observe that $s_i$ cannot be
larger than 0.5, because otherwise player $i$ would gain by deviating
to  clustering payoffs above $\pi_i(0.5,s_{-i})$ and following the
improvement path that starts by changing her strategy to 0.5. Any $s_i
\in [\frac{1}{3},\frac{1}{2}]$ can be supported as such a CUE outcome
by having player $i$ clustering all payoffs and player $-i$
clustering the payoffs below her CUE payoff. 

Next, we characterize the CUE in which neither player plays her
unclustered best reply. Theorem \ref{Thm:neccesary-conditions-monotone}
implies that each player chooses a lower quantity than her unclustered
best reply.  If $s_i<\min(s_{-i},BR_i(s_{-i}))$, then
$(s_i,s_{-i})$ cannot be a CUE outcome, because for a sufficiently
small $\epsilon>0,$ player $i$ gains by clustering the payoffs above
$\pi_i(1-s_i-s_{-i}-\epsilon,s_{-i}),$ and changing her strategy to
$1-s_i-s_{-i}-\epsilon.$  Any payoff-improving opponent's reaction
reaction must be to a lower quantity, which further benefits player
$i$. Next observe that any symmetric profile $(s_i,s_{i})$ cannot be a
CUE outcome for either (1) $s_i>\frac{1}{3}$, because the players play
above their unclustered best reply, and (2) for
any $s_i<\frac{1}{4}$, because player $i$ gains by deviating to
clustering the payoffs above $\pi_i(0.5,s_i)$ and  deviating to
0.5. Finally, note that a symmetric profile $(s_i,s_i)$ can be
supported as a CUE outcome for all $s_i\in[\frac{1}{4},\frac{1}{3}]$
by having each player cluster together the payoffs above the CUE payoff
$\pi_i(s_i,s_i).$  

Thus the set of CUE outcomes (illustrated in Figure
\ref{fig:The-Set-of-CUE-Cournot}) includes 3 intervals that intersect
in the unique Nash equilibrium $\frac{1}{3}$: two intervals that end
in the Stackelberg equilibria, in which one of the players plays her
unclustered best reply and the sum of payoffs is lower than in the Nash
equilibrium; and an interval that ends in the efficient profile
$\frac{1}{4}$ in which both players equally divide the monopoly
quantity. In the latter interval, both players gain a higher payoff
than in the Nash equilibrium. 
\begin{figure}[ht]
    \centering{}\caption{\label{fig:The-Set-of-CUE-Cournot}The Set of
    CUE Outcomes 
in Example \ref{exam-Cournot} (Cournot competition)}
\includegraphics[scale=0.48]{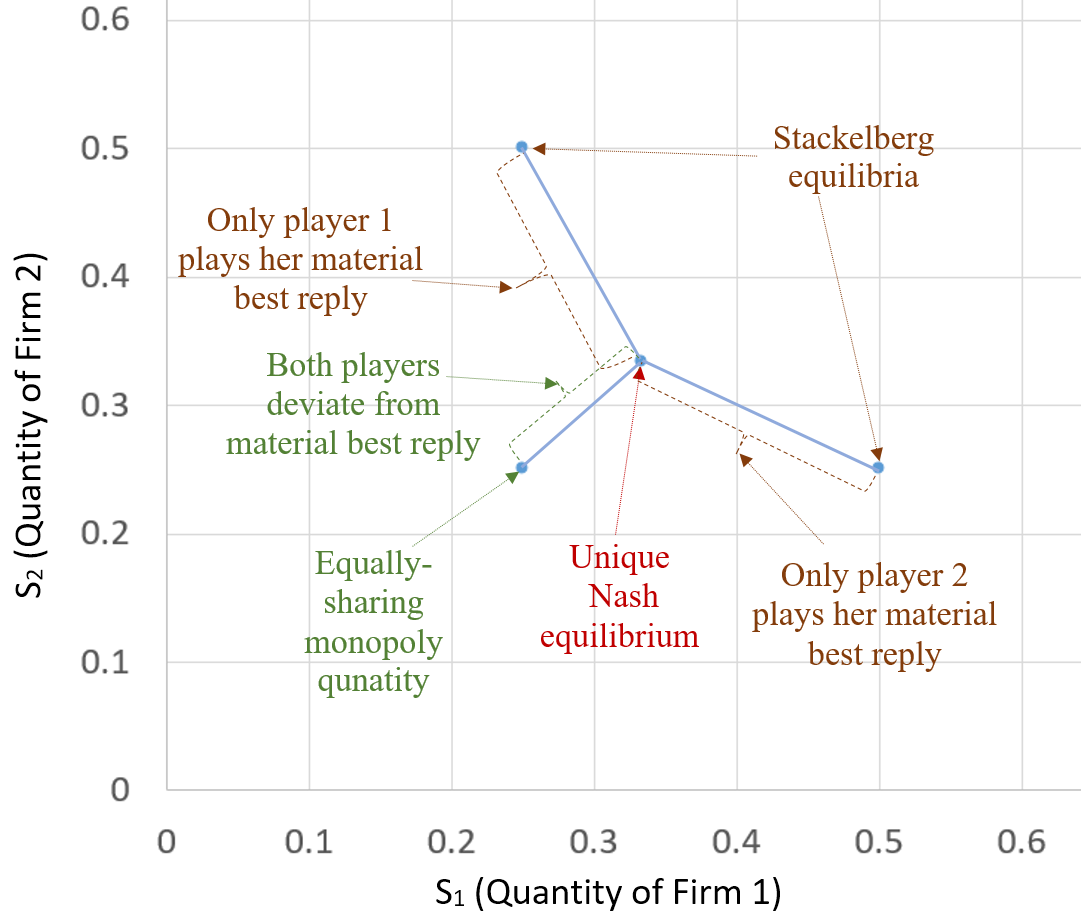}
\end{figure}
\end{example}

\shortv{
  \bibliographystyle{ACM-Reference-Format}
  \bibliography{CUE}
  }

{
\section{Conclusion}
We have considered the strategic implications of coarsening utilities by
clustering payoffs together.  This led us to a new solution concept,
coarse-utility equilibrium (CUE).  Clustering captures a common
human phenomenon: the fact that people describe outcomes using terms
like ``good''/''bad'', or ``unacceptable''/''fair''/''generous''.
Perhaps not surprisingly, CUE is able to capture in a reasonably
natural way cooperation in prisoner's dilemma, by assuming that people
use satisficing and cluster all outcomes above their aspiration level.
Less trivially, we show that CUE makes interesting predictions that
are supported by experimental evidence in interval games with montone
externalities.}

{
Of course, many explanations besides clustering can explain players'
behaviors in the games that we consider.  We need experimental
evidence to verify that clustering is really what is going on.
Fortunately, our model should be readily testable.
For example, by
  designing experiments involving some of the games we discuss in the
  paper, a treatment in which the play of the
  game is preceded by a stage of pre-play communication can be
  compared to a
  treatment in which the game is played without communication. Such
  comparisons can reveal the extent to which the difference between
  Nash equilibria and the equilibria predicted by our model is
  confirmed in the lab. The pre-play communication can be designed
  to permit only messages regarding clustering, so as to examine the extent
  to which players' choice of clustering is consistent with our
  model.  We hope to carry out these experiments in future work.}

\appendix
\section*{Appendix}
\section{Proofs}
\subsection{Proof of Proposition \ref{prop-falk-thm} (Folk Theorem for
Weak CUE)}\label{proof-falk-thm}
The following simple observation will be useful in the proof of Proposition \ref{prop-falk-thm}. 
(The standard proof is omitted for brevity.)
\begin{lem}
\label{fact-minimax}~
\begin{enumerate}
\item The 
maximin payoff $\underline M_{i}$ depends only on the payoff function of
  player $i$, and not on the payoff function of the opponent.

\item Each player must obtain at least her 
maximin payoff in
    all Nash   equilibria; 
that is, if $s\in NE\left(G\right)$ then $\pi_{i}\left(s\right)\ge \underline M_{i}$.
\end{enumerate}
\end{lem}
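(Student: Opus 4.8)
The plan is to derive both parts directly from the defining formula $\underline{M}_{i}=\min_{s_{-i}\in S_{-i}}\max_{s_{i}\in S_{i}}\pi_{i}(s_{i},s_{-i})$, relying only on the standing assumptions that each $\pi_i$ is continuous and each $S_i$ is compact, so that the inner maximum and the outer minimum are genuinely attained (this is already what guarantees $BR_i$ is nonempty in the model section). No minimax theorem or fixed-point argument is needed, since at no point do I swap the order of $\max$ and $\min$.

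For part (1), I would simply inspect the right-hand side of the definition of $\underline{M}_{i}$: it refers only to the strategy spaces $S_i$ and $S_{-i}$ and to the single payoff function $\pi_i$, and the opponent's payoff function $\pi_{-i}$ never enters. Hence $\underline{M}_i$ is a functional of $(\pi_i, S_1, S_2)$ alone, which is exactly the assertion. This step is immediate and presents no obstacle.

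For part (2), I would fix $s=(s_i,s_{-i})\in NE(G)$ and write $g(s_{-i}')=\max_{s_i'\in S_i}\pi_i(s_i',s_{-i}')$, so that $\underline{M}_i=\min_{s_{-i}'\in S_{-i}}g(s_{-i}')$. The Nash condition says $s_i$ best replies to $s_{-i}$, hence $\pi_i(s)=\max_{s_i'\in S_i}\pi_i(s_i',s_{-i})=g(s_{-i})$. Because $\underline{M}_i$ is the minimum of $g$ over all of $S_{-i}$, evaluating $g$ at the particular point $s_{-i}$ can only yield something at least as large, i.e. $\underline{M}_i=\min_{s_{-i}'}g(s_{-i}')\le g(s_{-i})=\pi_i(s)$. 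Chaining these gives $\pi_i(s)\ge\underline{M}_i$, as required.

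The only subtlety worth flagging, and the closest thing to an obstacle, is bookkeeping rather than mathematics: one must make sure the maxima and minima are attained so that all the displayed relations hold between real numbers rather than between a supremum and an infimum, and that the best-reply set being possibly non-singleton does not matter (any $s_i\in BR_i(s_{-i})$ attains $g(s_{-i})$, so $\pi_i(s)=g(s_{-i})$ regardless of which best reply is selected). Both points follow from continuity of $\pi_i$ together with compactness of $S_i$ and $S_{-i}$, which are in force throughout, so the proof is entirely routine.
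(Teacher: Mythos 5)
Your proof is correct and is exactly the standard argument that the paper deliberately omits (``The standard proof is omitted for brevity''): part (1) by inspection of the defining formula, and part (2) by noting that the Nash condition makes $\pi_i(s)$ equal to the inner maximum at $s_{-i}$, which is bounded below by the outer minimum. Nothing further is needed.
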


\begin{proof}[Proof of Proposition \ref{prop-falk-thm}]
~
\begin{enumerate}
\item Assume to the contrary that $s$ is a weak CUE outcome and that it
is not individually rational. Let $\left(f,s\right)$ be a weak CUE.
Let $i$ be a player for which $\pi_{i}\left(s\right)<\underline M_{i}$. Consider
a deviation by player $i$ to $f'_{i}=Id_{i}$ (i.e., not clustering
any payoffs together). Let $s'$ be a Nash equilibrium of
$G_{\left(Id_{i},f_{-i}\right)}$.
Lemma \ref{fact-minimax} implies that $\pi_{i}\left(s'\right)\geq
\underline M_{i}$, 
which contradicts the assumption that $\left(f,s\right)$ is weak
CUE outcome.
\item Assume that $s$ is strongly individually rational. Let $f^{\mathbb{R}}$
be the symmetric profile in which all players cluster together all
payoffs. It is immediate that $s\in NE\left(G_{f^{\mathbb{R}}}\right)$.
Fix an arbitrary player $i$. Let $\overline{s}_{-i}\in S_{-i}$ be
the strategy profile that guarantees that player $i$'s payoff is
at most $\overline{M}_{i}$ (i.e., $\pi_{i}\left(s''_{i},\overline{s}_{-i}\right)\leq \overline M_{i}$
for each $s''_{i}\in S_{i}$). For each clustering $f'_{i}$, let
$s'_{i}$ be a clustered best reply of player $i$ against $\overline{s}_{-i}$.
Observe that $\left(s'_{i},\overline{s}_{-i}\right)\in NE\left(G_{\left(f'_{i},f_{-i}\right)}\right)$
and that $\pi_{i}\left(s'_{i},\overline{s}_{-i}\right)\leq \overline M_{i}\leq\pi_{i}\left(s\right)$,
which implies that $\left(f^{\mathbb{R}},s\right)$ is a weak CUE.\qedhere
\end{enumerate}
\end{proof}

\subsection {CUE in Constant-Sum Games}\label{proof-constant-sum}
\begin{proof}[Proof of Proposition \ref{claim-constant-sum}]
~
\begin{enumerate}
\item Let $s\in NE\left(G\right).$ We show that $\left(f^\emptyset,s\right)$ is
a strong CUE. Fix
a player $i$ and any clustering $f_{i}$. By Lemma \ref{fact-minimax} in
Appendix \ref{proof-falk-thm}: 
\[
s'\in NE\left(G_{\left(f_{i},Id_{-i}\right)}\right)\Rightarrow\pi_{-i}\left(s'\right)\geq M_{-i}\Rightarrow\pi_{i}\left(s'\right)\leq M_{i}\leq\pi_{i}\left(s\right)\Rightarrow\left(f,s\right)\,\textrm{is a strong CUE.}
\]
\item Let $\left(f,s\right)$ be a weak CUE. Proposition 1 implies that
$\pi_{i}\left(s\right)\geq M_{i}$ for each $i\in I$. The game being
  constant-sum implies
that $\pi_{1}\left(s\right)+\pi_{2}\left(s\right)=M_{1}+M_{2}$, which
implies that $\pi_{i}\left(s\right)=M_{i}$ for each $i\in I$. \qedhere
\end{enumerate}
\end{proof}
The following example shows that an underlying zero-sum game might
admit a strong CUE outcome that is not a Nash equilibrium (although
it will provide each player her unique Nash equilibrium payoff). 
\begin{example}[Non-Nash strong CUE outcome in a zero-sum Game]
Consider the zero-sum game $G_{zs}$ 
that is presented in Table \ref{tab:Underlying-Zero-Sum-Game}
below. Consider the symmetric clustering profile $f^{\geq0}$ in which
the players cluster all non-negative payoffs together. We show that
$\left(f^{\geq0},\left(b,b\right)\right)$ is a strong CUE, although
$\left(b,b\right)$ is not a Nash equilibrium of $G_{zs}$.
Observe first that $(b,b)\in NE(G_{f^{\geq0}})$. Next, consider a
deviation by player $i$ to a clustering $f'_{i}$. Let $s'\in
NE(G_{\left(f'_{i},f_{-i}^{\geq0}\right)})$ 
be a Nash equilibrium of the coarse-utility game $G_{\left(f'_{i},f_{-i}^{\geq0}\right)}$.
Observe that the opponent can guarantee a clustered payoff of at
least 0 in $G_{\left(f'_{i},f_{-i}^{\geq0}\right)}$ by playing $a$.
This implies that $f_{-i}\left(\pi_{-i}\left(s'\right)\right)\geq0\Rightarrow\pi_{-i}\left(s'\right)\geq0$.
The fact that the game is zero sum implies that $\pi_{i}\left(s'\right)\leq0$,
and, thus $\left(f^{\geq0},\left(b,b\right)\right)$ is a strong CUE.
\begin{table}[ht]
\caption{\label{tab:Underlying-Zero-Sum-Game}Underlying Zero-Sum Game $G_{zs}$}

\centering{}
\begin{tabular}{|c|c|c|c|}
\hline 
 & \textcolor{red}{a } & \textcolor{red}{b } & \textcolor{red}{c}\tabularnewline
\hline 
\textcolor{blue}{a } & \textcolor{blue}{0},\textcolor{red}{0}  & \textcolor{blue}{0},\textcolor{red}{0}  & \textcolor{blue}{0},\textcolor{red}{0}\tabularnewline
\hline 
\textcolor{blue}{b } & \textcolor{blue}{0},\textcolor{red}{0} & \textcolor{blue}{0},\textcolor{red}{0}  & \textcolor{blue}{-1},\textcolor{red}{1}\tabularnewline
\hline 
\textcolor{blue}{c } & \textcolor{blue}{0},\textcolor{red}{0}  & \textcolor{blue}{1},\textcolor{red}{-1}  & \textcolor{blue}{0},\textcolor{red}{0}\tabularnewline
\hline 
\end{tabular}
\end{table}
\end{example}

\subsection{Proof of Proposition \ref{claim-common-interest} (Games with
Common Interests)}\label{proof-common-interest} 

\begin{enumerate}
\item Proposition \ref{fact2:any-NE-is-NE-clustered-1} implies that $s\in
  NE\left(G\right)$ so $s$ 
is a CUE outcome. Next, assume to the contrary that $\left(f,s\right)$
is a CUE and that $s\neq NE\left(G\right)$. The fact that $s\neq NE\left(G\right)$
implies that there exists player $i\in I$ and strategy $s'_{i}$,
such that $\pi_{i}\left(s\right)<\pi_{i}\left(s'_{i},s_{-i}\right)$.
Consider a deviation by player $i$ to $Id_{i}$ (i.e., to not clustering
any payoffs). Observe that $s\notin NE\left(G_{\left(Id_{i},f_{-i}\right)}\right),$
and that the fact that the game has common interests implies that
the payoffs of all players strictly improve in an improvement path.
Consider the improvement path in which at each stage one of the players
who is not best-replying changes her strategy to her clustered best
reply. The improvement path cannot have a cycle (since the payoffs
of all players strictly increase) and it must converge 
to some $s''\in
PNE\left(G_{\left(Id_{i},f_{-i}\right)},s\right)$. 
It is immediate that $\pi_{i}\left(s''\right)>\pi_{i}\left(s\right)$,
which contradicts $\left(f,s\right)$ being a CUE.
\item If $s$ is a Pareto-dominant Nash equilibrium, then it is immediate
that $\left(Id,s\right)$ is a strong CUE because $\pi_{i}\left(s'\right)\leq\pi_{i}\left(s\right)$
for any strategy profile $s'$. Next, assume to the contrary that
that $\left(f,s\right)$ is a strong CUE and $s$ is not a Pareto-dominant
Nash equilibrium of $G$. Let $s'$ be a Pareto-dominant Nash equilibrium

of $G$. Proposition \ref{fact2:any-NE-is-NE-clustered-1} implies that $s'\in
NE\left(G_{f'}\right)$ 
for any clustering profile $f'$. This implies that if a player $i$
deviates to a clustering $f''_i$, then $s'\in
NE\left(G_{\left(f_{i}'',f_{-i}\right)}\right)$ 
and $\pi_{i}\left(s'\right)>\pi_{i}\left(s\right)$, which contradicts
$\left(f,s\right)$ being a strong CUE. 
\end{enumerate}

\subsection {Proof of Proposition \ref{prop-efficient-Stackelberg-robust} ( Conditions Implying Strong CUE)}\label{proof-sufficeint-strong}

Let $f^{\geq\pi\left(s\right)}=\left(f_{i}^{\geq\pi_{i}\left(s\right)},f_{-i}^{\geq\pi_{-i}\left(s\right)}\right)$
be the profile in which each player clusters all the payoffs above
her CUE payoff. We show that $\left(f^{\geq\pi\left(s\right)},s\right)$
is a strong CUE. Assume to the contrary that
$\left(f^{\geq\pi_{i}\left(s\right)},s\right)$ 
is not a strong CUE. Then there exists a player $i$, a clustering
$f'_{i}$, and an equilibrium $s'\in NE(G_{(f'_{i},f_{-i}^{\geq\pi_{-i}\left(s\right)})})$
such that $\pi_{i}\left(s'\right)>\pi_{i}\left(s\right)$. The fact
that $s'\in NE(G_{(f'_{i},f_{-i}^{\geq\pi_{-i}\left(s\right)})})$
implies that either $\pi_{-i}\left(s'\right)\geq\pi_{-i}\left(s\right)$
or $s'_{-i}\in BR_{-i}\left(s_{i}'\right)$, which contradicts either
condition (1) or condition (2) above.

\subsection{Proof of Proposition \ref{prop-Stackelberg} (Stackelberg Equilbirum is a CUE)\label{proof-stackelberg}}
  The fact that player $i$ clusters together all payoffs implies that
  he would continue playing $s_{i}$ in any plausible equilibrium following
  a deviation by her opponent to a different clustering. Due to this,
  condition $1$ of Definition \ref{def-Stackelberg} implies that player
$-i$ cannot gain by deviating to a different clustering. The fact
that player $-i$ does not cluster any payoffs together implies that
she always best replies against her opponent, which, due to condition
2 of Definition \ref{def-Stackelberg}, implies that player $i$ cannot
gain by deviating to a different clustering. This implies that $\left(\left(f_{i}^{\mathbb{R}},f_{-i}^{\emptyset}\right),s\right)$
is a CUE.

\subsection {Proof of Theorem \ref{Thm:neccesary-conditions-monotone} (Games with Monotone Externalities)}\label{proof-montotone-externalities}

Let $f$ be a clustering
profile $f$ for which $\left(f,s\right)$ is a CUE.\\
Part (1): Assume to the contrary that $s_{i}\not \succ_{-i}BR_{i}\left(s_{-i}\right)$.
The assumption that $s_{i}\notin BR_{i}\left(s_{-i}\right)$ implies that $s_{i}\prec_{-i}\left(BR_{i}\left(s_{-i}\right)\right)$.
Let $s'_{-i}\neq s_{-i}$ be a strategy that satisfies the following
two properties: (1) $s'_{-i}$ is closer to $BR_{-i}\left(s_{i}\right)$
than $s_{-i}$, and (2) $s'_{-i}$ is sufficiently close to $s_{-i}$
that $s_{i}\prec_{-i}BR_{i}\left(s'_{-i}\right)$ 
(such a strategy $s'_{-i}$ exists because the set of strategies that are (strictly) externalities-lower than
$BR_{i}(s'_{-i})$ is open). Let
$\pi'_{-i}=\pi_{-i}\left(s'_{-i},s_{i}\right)$. 
Consider a deviation by player $-i$ to the clustering
$f_{-i}^{\geq\pi'_{-i}}$ and
the following improvement path in
$G_{\left(f_{-i}^{\geq\pi'_{-i}},f_{i}\right)}$
with respect to $s$. First, player $-i$ deviates to $s'_{-i}$ (which
strictly increases her clustered payoff). Next,
if $s_{i}$ is not a clustered best reply to $s'_{-i}$, then player $i$ changes
her strategy to a strategy $s'_{i}$ that is a clustered best
reply to $s'_{-i}$, and otherwise $s'_{i}=s_{i}$. The assumption that
$s_{i}\prec_{-i}BR_{i}\left(s'_{-i}\right)$ 
implies that $s_{i}\preceq_{-i}s'_{i}$. Observe that following these
two stages, the improvement path converges to a plausible equilibrium.
Since the game has monotone externalities, this plausible equilibrium
yields player $-i$ a strictly higher unclustered payoff than
$\pi_{-i}\left(s\right),$ which 
contradicts $s$ being a CUE outcome.

Part (2): Assume to the contrary that there exists a strategy profile
$s'$ satisfying $s'_{i}\preceq_{-i}s_{i}$, $s'_{-i}\preceq_{i}s_{-i}$,
$\pi_{i}\left(s'\right)>\pi_{i}\left(s\right)$, and either (a) $\pi_{-i}\left(s'\right)\geq\pi_{-i}\left(s\right)$
or (b) $s'_{-i}\in BR_{-i}\left(s_{i}'\right)$. Observe that monotone
externalities imply that $s'_{i}\neq
s_{i}$. Let
$\pi'_{i}=\pi_{i}\left(s'_{i},s_{-i}\right)$. 
Consider a deviation by player $i$ to the clustering
$f_{i}^{\geq\pi'_{i}}$ and 
the following improvement path in $G_{(f_{i}^{\geq\pi'_{i}},f_{-i})}$
with respect to $s$. First, player $i$ deviates to the strategy $s'_{i}$
that gives her a strictly higher clustered payoff
$$f_{i}^{\geq\pi'_{i}}\left(\pi_{i}\left(s'_{i},s_{-i}\right)\right)=\pi_{i}\left(s'_{i},s_{-i}\right)>\pi_{i}\left(s'\right)>\pi_{i}\left(s\right)=f_{i}^{\geq\pi'_{i}}\left(\pi_{i}\left(s\right)\right),$$ 
where the first inequality is due to the monotone externalities.
If $s_{-i}$ is a clustered best reply against $s'_{i},$ then $\left(s'_{i},s_{-i}\right)\in PNE(G_{(f_{i}^{\geq \pi'_{i}},f_{-i})},s)$
is a plausible equilibrium that gives the deviating player $i$ a
higher payoff, and we get a contradiction to $\left(s,f\right)$
being a CUE. Otherwise, player $-i$ deviates to $s'_{-i}.$ 

There are now two cases:
\begin{enumerate}
\item $\pi_{-i}\left(s'\right)\geq\pi_{-i}\left(s\right)$: Observe
that 
\begin{equation}
f_{-i}\left(BRP_{-i}\left(s_{i}\right)\right)\geq f_{-i}\left(BRP_{-i}\left(s'_{i}\right)\right)\geq f_{-i}\left(\pi_{-i}\left(s'\right)\right)\geq f_{-i}\left(\pi_{-i}\left(s\right)\right),
\label{eq-chain-ineq}
\end{equation}
where the first inequality is due to monotone externalities and the
last inequality is implied by $\pi_{-i}\left(s'\right)\geq\pi_{-i}\left(s\right)$.
The fact that $s \in NE (G_f)$ implies that 
$f_{-i}\left(BRP_{-i}\left(s_{i}\right)\right)=f_{-i}\left(\pi_{-i}\left(s\right)\right)$,
and thus all the terms in (\ref{eq-chain-ineq}) are equal to each other, which implies that  $s'\in PNE(G_{(f_{i}^{\geq\overline{\pi}_{i}},f_{-i})},s)$ (because $s'_{-i}$ is a clustered best reply to $s'_{i}$).

\item $s'_{-i}\in BR_{-i}\left(s_{i}'\right)$: It is immediate
that $s'\in PNE(G_{(f_{i}^{\geq \pi'_{i}},f_{-i})},s))$.
\end{enumerate}
In both cases, $s'$ is a plausible equilibrium that gives the
deviating player $i$ a higher payoff, so we get a contradiction to 
$\left(s,f\right)$ being a CUE.

\subsection{Proof of Theorem \ref{thm-complements}\label{proof-complements} (Strategic Complements)}

In order to prove Theorem \ref{thm-complements}, we need the following lemma:

\begin{lem}
  \label{Lemma-complements-standard} If $G$ is an interval game with
  strategic complements and monotone externalities, $s^{WNE}$ is 
the worst Nash equilibrium of $G$, and $s$ is a strategy profile satisfying
$s_{i}\succeq_{-i}BR_{i}\left(s_{-i}\right)$ for each player $i$,
then $s_{i}\succeq_{-i}s_{i}^{WNE}$ for each player $i$.
\end{lem}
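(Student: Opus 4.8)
The plan is to recast the claim as a monotone fixed-point statement in the externalities order and settle it with the Knaster--Tarski theorem. Throughout, write $s_i\succeq_{-i}s'_i$ for the externalities order on player $i$'s strategies, and let $\sqsubseteq$ denote the induced product order on $S$, i.e.\ $s\sqsubseteq s'$ iff $s_i\preceq_{-i}s'_i$ for each player $i$. Since each $S_i$ is a closed bounded interval, each $(S_i,\preceq_{-i})$ is order-isomorphic to a compact real interval, so $(S_i,\preceq_{-i})$ and hence the product $(S,\sqsubseteq)$ are complete lattices.

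The crux, and the one place where strategic complements actually enters, is to show that the best-reply correspondence is monotone in the externalities order: if $t_{-i}\succeq_i t'_{-i}$, then $BR_i(t_{-i})\succeq_{-i}BR_i(t'_{-i})$ in the strong set order. Strategic complements ($\partial\pi_i/\partial s_i$ strictly increasing in $s_{-i}$) is an increasing-differences condition with respect to the \emph{standard} order, so Topkis' monotone comparative statics theorem gives that $BR_i(t_{-i})$ is increasing in $t_{-i}$ in the strong set order for the standard order on $\mathbb{R}$. I would then transport this into the externalities order by checking the two cases in the definition of monotone externalities. Under positive externalities, $\succeq_{-i}$ coincides with the standard order for every player, so the translation is immediate. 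Under negative externalities, both the order on the argument $t_{-i}$ and the order on the value $BR_i(t_{-i})$ are the \emph{reversed} standard order; reversing the order on both sides of an order-preserving statement leaves it an order-preserving statement, so the displayed monotonicity again holds. Hence in either case $BR_i$ is $\succeq$-monotone.

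With this in hand I would define the selection $T\colon S\to S$ by $T(s)_i=\max_{\succeq_{-i}}BR_i(s_{-i})$, the externalities-greatest best reply. It is well defined because weak concavity of $\pi_i$ makes $BR_i(s_{-i})$ a nonempty closed interval, which has a $\succeq_{-i}$-greatest element, and it is $\sqsubseteq$-monotone by the previous paragraph. The hypothesis $s_i\succeq_{-i}BR_i(s_{-i})$ says exactly that $s_i$ is weakly $\succeq_{-i}$-above every best reply, hence above the greatest one, i.e.\ $T(s)\sqsubseteq s$. Consequently $T$ maps the sublattice $[\,\bot,s\,]=\{x\in S: x\sqsubseteq s\}$ into itself (if $x\sqsubseteq s$ then $T(x)\sqsubseteq T(s)\sqsubseteq s$ by monotonicity), and $[\,\bot,s\,]$ is itself a complete lattice, so Knaster--Tarski yields a fixed point $\hat s\sqsubseteq s$ of $T$. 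Any fixed point of $T$ satisfies $\hat s_i\in BR_i(\hat s_{-i})$ for each $i$, so $\hat s$ is a Nash equilibrium of $G$.

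Finally, since $s^{WNE}$ is by definition the externalities-lowest Nash equilibrium, we obtain $s^{WNE}\sqsubseteq\hat s\sqsubseteq s$, that is $s_i\succeq_{-i}s^{WNE}_i$ for each player $i$, as required. The main obstacle is the first step: correctly transporting the strategic-complements monotonicity of $BR_i$ from the standard order into the externalities order in both the positive- and negative-externalities cases. Once that monotonicity is secured, the remainder is routine lattice theory, using only weak concavity (for the well-definedness of $T$) and the given characterization of $s^{WNE}$ as the least Nash equilibrium; note in particular that the argument never needs to construct $s^{WNE}$ explicitly or invoke any continuity beyond what is already assumed.
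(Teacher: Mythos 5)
Your proof is correct, and it takes a genuinely different (though related) route from the paper's. The paper argues by contradiction: it forms the auxiliary game in which each player is restricted to strategies externalities-below $s$, invokes Kakutani to get an equilibrium $s^{RE}$ of that restricted game, observes that $s^{RE}$ cannot be a Nash equilibrium of $G$ if the conclusion fails (since it would lie strictly externalities-below $s^{WNE}$ in some coordinate), and then uses strategic complements only at the last step to show the binding constraint is impossible because $BR_{i}(s_{-i}^{RE})\preceq_{-i}BR_{i}(s_{-i})\preceq_{-i}s_{i}$. You instead front-load the use of strategic complements into a Topkis-style monotonicity of $BR_{i}$ in the externalities order (your case check for positive versus negative externalities is the right and necessary step there), and then run Knaster--Tarski on the externalities-greatest selection $T$ restricted to the order interval $[\bot,s]$; the hypothesis $s_{i}\succeq_{-i}BR_{i}(s_{-i})$ is exactly $T(s)\sqsubseteq s$, so the fixed point $\hat{s}\sqsubseteq s$ is automatically an unrestricted Nash equilibrium, and $s^{WNE}\sqsubseteq\hat{s}\sqsubseteq s$ finishes the argument. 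Both proofs ultimately produce a Nash equilibrium of $G$ in the box below $s$ and appeal to the defining property of $s^{WNE}$; yours is direct rather than by contradiction, replaces Kakutani with Tarski, and avoids the slightly delicate ``which constraint binds'' step at the cost of importing the strong-set-order machinery. The one point worth stating explicitly if you write this up is that the externalities-greatest element of $BR_{i}$ is a monotone selection of a strong-set-order-monotone correspondence (i.e., that $\max$ and $\min$ selections inherit monotonicity); this is standard but is the hinge on which the Tarski application turns.
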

\begin{proof}
  Assume to the contrary that there exists a player $j$ for which
  $s_{j}\prec_{-j}s_{j}^{WNE}$. 
Consider an auxiliary game $G^{R}$ similar to $G$ except that each
player $i$ is restricted to choosing a strategy $s_{i}$ satisfying
$s_{i}\preceq_{-i}s_{i}^{*}$. By a standard fixed-point theorem (\citealp{kakutani1941generalization}), the
restricted game admits a Nash equilibrium that we denote
$s^{RE}$. The strategy profile $s^{RE}$ cannot be a Nash 
equilibrium of $G$ because $s_{j}\succeq_{-i}s_{j}^{RE}$,
while $s_{j}\prec s_{j}^{WNE}$. This implies that there exists a player
$i$ for which $s_{i}^{RE}=s_{i}$ and
$s_{i}\prec_{-j}BR_{i}\left(s_{-i}^{RE}\right)$, 
which contradicts $s_{i}\succeq_{-i}BR_{i}\left(s_{-i}\right)\succeq_{-i}BR_{i}\left(s_{-i}^{RE}\right)$
(where the latter inequality is implied by  the assumption that $G$
has strategic complemenets and the fact that
$s_{-i}\preceq_{-i}s_{-i}^{RE}$).
\end{proof}

We can now prove Theorem \ref{thm-complements}. For the `if'' direction, suppose that Conditions
  (1--2) holds. Let $f^{\geq\pi\left(s\right)}=\left(f_{i}^{\geq\pi_{i}\left(s\right)},f_{-i}^{\geq\pi_{-i}\left(s\right)}\right)$
be the profile in which each player clusters all the payoffs above
her payoff in profile $s$. We show that $\left(f^{\geq\pi\left(s\right)},s\right)$
is a CUE. Assume to the contrary that $\left(f^{\geq\pi\left(s\right)},s\right)$
is not a CUE. Then there exists a player $i$, a clustering $f'_{i}$,
a plausible equilibrium $s'\in PNE(G_{\left(f'_{i},f_{-i}^{s}\right)})$
such that $\pi_{i}\left(s'\right)>\pi_{i}\left(s\right)$. Consider
an improvement path that converges to $s'$. The fact that $s_{j}\succeq_{j}BR_{j}(s_{-j})$
for each player $j$ implies that the first deviation of player $i$
is to an externalities-lower strategy with a strictly higher payoff,
that is, $s_{i}^{1}\prec_{-i}s_{i}$ and
$\pi_{i}\left(s_{i}^{1},s_{-i}\right)>\pi_{i}\left(s\right)$. 
Since the game has strategic complements, any payoff-improving deviation
in the second stage of any player $j$ must be to a strategy that
is externalities-lower $s_{j}$, that is, $s_{j}^{2}\preceq_{-j}s_{j}$. The same argument implies that at every later stage,
a payoff-improving deviation by any player $j$ must be to a strategy
$s_{j}^{k}\preceq_{-j}s_{j}$. Thus, the convergence point of the
improvement path $s'$ must be externalities-lower than $s$. The
fact that player $-i$ has clustering $f_{-i}^{\geq\pi\left(s\right)}$
implies that player $-i$ either 
\begin{enumerate}
\item obtains a payoff weakly higher than in $s$ (i.e.,
  $\pi_{-i}\left(s'\right)\geq\pi_{-i}\left(s\right)$, 
  which implies that $s'$ Pareto dominates $s$, violating condition (2a)), or
\item she plays an unclustered best reply ($s'_{-i}\in BR_{-i}\left(s'_{i}\right)$),
which violates condition (2b).
\end{enumerate}
Thus, both cases lead to a contradiction, which proves that $\left(f^{\geq\pi\left(s\right)},s\right)$
must be a CUE. 

For the ``only if'' direction, by Theorem
\ref{Thm:neccesary-conditions-monotone},
it suffices to consider the case where one of the players,
say player $2$, plays her unclustered best reply (i.e., $s_{2}\in
BR_{2}\left(s_{1}\right)$). 
Let $\left(f,s\right)$ be a CUE. We begin by showing that Condition
(1) holds. Assume to the contrary that player $1$ plays an externalities-lower reply, that is, $s_{1}\prec_{2}BR_{1}\left(s_{2}\right)$.
Let $s_{1}\prec_{2} s'_{1}\in BR_{1}\left(s_{2}\right)$.
Let $\pi'_{1}=\pi_{1}\left(s'_{1},s_{2}\right)>\pi_{1}\left(s\right)$.
Consider a deviation by player 1 to $f_{1}^{\geq\pi'_{1}}$, which
clusters together payoffs larger than $\pi'_{1}$, and the following
two-stage improvement path in $G_{\left(f_{1}^{\geq\pi'_{1}},f_{2}\right)}$
with respect to $s$:
\begin{enumerate}
\item Player 1 changes her strategy to $s'_{1}$ (which strictly increases
her clustered payoff). 
\item If $s_{2}$ is not a clustered best reply against $s'_{1}$, then
  player 2 changes her strategy to $s'_{2}\in BR_{2}\left(s'_{1}\right)$
  (observe that $s_{2}\preceq_{1}s'_{2}$ due to the game having strategic
complements, which further increases player 1's payoff).\\
\end{enumerate}
At the end of these two stages we have reached a plausible equilibrium
$\left(s'_{1},s'_{2}\right)\in PNE(G_{(f_{1}^{\geq\pi'_{1}},f_{2})},s)$
with a strictly higher payoff for player 1 (i.e.,
$\pi_{1}\left(s'_{1},s''_{2}\right)>\pi_{1}\left(s\right)$, 
which contradicts $\left(f,s\right)$ being a CUE.
The proof that condition (2) holds is essentially the same as
in Theorem \ref{Thm:neccesary-conditions-monotone}, and is omitted
for brevity.

Finally, we prove the ``moreover'' condition in the last sentence of
the theorem statement.  The inequality
$s_{i}\succeq_{-i}BR_{i}(s_{-i})$ 
implies that $s_{i}\succeq_{-i}s_{i}^{WNE}$ for each player $i$
by Lemma \ref{Lemma-complements-standard}). Finally, 
we have to show that $\pi_{i}\left(s\right)\geq\pi_{i}\left(s^{WNE}\right)$
for each player $i$. Assume to the contrary that one of the players,
say player $1$, obtains a strictly lower payoff than in the lowest
Nash equilibrium, that is, $\pi_{1}\left(s\right)<\pi_{1}\left(s^{WNE}\right).$
Consider a deviation by player 1 to $f_1^\emptyset$ (not clustering any
payoffs together). Consider the following improvement path. Let $s^{0}=s$.
In stage 1, if $s_{1}^{0}\notin BR_{1}\left(s_{2}^{0}\right)$, then
player 1 decreases her payoff to an unclustered best reply strategy $s_{1}^{1}\in BR_{1}\left(s_{2}^{0}\right),$ which satisfies $s_1^{WNE}\preceq s_{1}^{1}.$
Since $s_{1}^{1}\preceq s_{1}^{0}$ and the game has strategic complements,
$s_{2}^{1}=s_{2}^{0}\succeq_{1}BR_{2}(s_1^{1})\,$.
In stage 2, if $s_{2}^{1}\notin BR_{2}\left(s_{1}^{1}\right)$, then
player 2 decreases her strategy to an unclustered best reply $s_{2}^{2}\in
BR_{2}\left(s_{1}^{0}\right),$ and because the game has the strategic
complements, 
$s_{2}^{2}\succeq_{1}s_{2}^{WNE}.$ 
A straightforward induction show that (1) for every even $k$, in stage
$k+1$, if $s_{1}^{k}\notin 
BR_{1}\left(s_{2}^{k}\right)$, 
then player 1 decreases her strategy to an unclustered best reply, that is,
$s_{1}^{k+1}\in BR_{1}\left(s_{2}^{k}\right)$, which satisfies
$s_{2}^{k}\succeq_{2}s_{1}^{WNE}$ (because the game has strategic
complements), and (2)
for every odd $k$, in stage $k+1$, if $s_{2}^{k}\notin 
BR_{2}\left(s_{1}^{k}\right)$, 
then player 2 decreases her payoff to an unclustered best reply, that is,
$s_{2}^{k+1} \in BR_{1}\left(s_{2}^{k}\right)$, which satisfies $s_{2}^{k+1}\succeq_{1}s_{2}^{WNE}$
(the change must be a decrease for both even and odd $k$-s, since
the game has strategic complements). The fact that the players
always decrease their strategies (whenever they change them) implies
that the improvement path converges, and the limit $s'$ must be a
plausible equilibrium that satisfies (1) $s'_{1}\in
BR_{1}\left(s'_{2}\right)$, 
and (2) $s'_{i}\succeq_{-i}s_{i}^{WNE}$ for each player $i$. This
implies that $\pi_{1}\left(s'\right)\geq \pi_{1}(s_1^{WNE},s'_2)\geq \pi_{1}\left(s^{WNE}\right)>\pi_{1}\left(s\right)$,
which contradicts $\left(f,s\right)$ being a CUE.

\subsubsection{\label{proof-substitutes}Proof of Theorem \ref{thm-substitues} (Strategic Substitutes)}

\begin{enumerate}
    \item Assume to the contrary that $s_i\succ BR_i(s_{-i}).$ Let
      $s'_i \in BR_i (s_{-i})$ be an unclustered best reply
     strategy. Observe that  $s'_i  \prec s_i$. Let
     $\pi'_i= \pi_i(s'_i,s_{-i}).$ Consider a deviation of player $i$
     to the clustering $f_i^{\geq \pi'_i}.$ Consider the following
      improvement path. In the first stage, player $i$ changes her
      strategy to $s^1_1=s'_i.$ If $s_2$ is a clustered best reply of
      player 2, then this ends the improvement path. Otherwise, the
      improvement path includes an additional final stage in which
            player 2 changes her strategy to an unclustered best reply, i.e.,
      $s_{-i}^2\in BR_{-i}(s'_1).$ Strategic substitutability implies
      that $s_{-i}^2\succ s_{-i}.$ Observe that player 1 obtains a
      strictly higher payoff in the plausible Nash equilibrium that
      ends this improvement path relative to $\pi_i(s),$ which
      contradicts $(f,s)$ being a CUE. 
    \item Consider an auxiliary game $G^{R}$ similar to $G$ except
      that player $i$ is restricted to choosing strategies that are
      weakly externalities-lower than $s^i$. By a standard fixed-point
      theorem (\citealp{kakutani1941generalization}), the restricted game admits a Nash equilibrium that we
      denote $s^{RE}$. If $s^{RE}$ is not a Nash equilibrium of the
      original underlying game $G,$ then it must be that
      $s^{RE}_i=s_i$ and $s_i \prec BR_{-i}(s^{RE}_{-i}).$ The
      assumption that the payoff function is strictly convex implies
      that  $s_{-i}=s^{RE}_{-i}$ is the unique best reply to
      $s_{-i}$. Since the game has strategic substitutes,
      $s_i \succeq BR_{-i}(s^{RE}_{-i})$, so we get a a
      contradiction. Thus, $s^{RE}$ must be a Nash equilibrium of the
      unrestricted game $G$. It is immediate that $s_i\preceq_i
      s^{RE}_i.$ Finally, the fact that the game has strategic
      substitutes implies that $s_{-i}\succeq_i s^{RE}_{-i}.$ 
\end{enumerate}

\section{Partial Observability}\label{sec-partial}
Throughout the paper we assume that if an agent deviates to a different clustering, then the opponent always observes this deviation. In this appendix, we relax this assumption, and show that
our results also hold in a setup with partial observability (most
results hold for any level of partial 
observability, while the remaining results hold for a sufficiently
high level of observability).  Our partial-observability extension is analogous to that of \citet[Online appendix E]{heller2020biased}, and in the
spirit of the observation structure of
\citet{DekelElyEtAl2007Evolution}. 

\subsection {Adapted Model}
Let $p\in [0,1]$ denote the probability that an agent who is matched
with an opponent who deviates to a different clustering
\emph{privately} observes the opponent's deviation (henceforth,
\emph{observation 
probability}). If an agent does not observe the deviation, then she continues playing her original CUE strategy.

We define a $p$-restricted coarse-utility game as a game between an
incumbent (player $-i$) and a deviator (player $i$) in which the
incumbent is restricted to playing her original strategy $s$  with
probability 
$1-p$ (i.e., when not observing the opponent's deviation).  Formally:

\begin{defn}
Fix player $-i$, clustering profile $f$ and strategy $s_{-i}$. The payoff function $\pi^p_j$  of each player $j$ in the \emph{$p$-restricted coarse-utility game} $G^p_{f,s_{-i}}=(S,f \circ \pi^p)$ is defined as follows:
 $$\pi_j^p(s')= p\cdot f_j\circ \pi_{j}(s') + (1-p) \cdot f_j \circ \pi_{j}(s_i,s_{-i}').$$
\end{defn}
The deviator (player $i$) is aware that her different clustering is
privately observed by her opponent (player $-i$) with probability
$p$. Thus, the deviator faces two different possible payoffs (one when
her clustering is observed by her opponent, and one when it is not
observed); she evaluates each payoff using her coarse utility
($f_j\circ \pi_{j}(s')$ and $f_j \circ \pi_{j}$, respectively).
She next evaluates her expected payoff as the mixed average of
these two outcomes ($\pi_j^p(s')$), and uses  her coarse utility
to obtain her final evaluation of her payoff ($f_j \circ \pi_j (s')$). Observe that in this appendix (unlike in the model used in the
main text), the clustered payoff function has cardinal meaning (as it is
used in the expected payoff calculation of $\pi_{j}$). 

The set $NE(G^p_{f,s_{-i}})$ of Nash equilibria of $G^p_{f,s_{-i}}$ is defined in
the standard way. Next, we adapt our definition of plausible equilibrium.
\begin{defn}
\label{def:plausible-p}Fix a strategy profile $s$ and a clustering
profile $f$. An equilibrium $s'\in NE(G^p_{f,s_{-i}})$ is \emph{$p$-plausible}
with respect to $s$ if there is a sequence $\left(s^{k}\right)_{k\geq0}$
of strategy profiles satisfying: (1) $s^{0}=s$, (2) $\lim_{k\rightarrow\infty}s^{k}=s'$,
and (3) if $s_{i}^{k+1}\neq s_{i}^{k}$ for player $j$ and $k\geq0$,
then $f_j \circ \pi_{j}^p(s_{i}^{k+1},s_{-i}^{k})> f_j \circ
\pi_{j}^p(s_{i}^{k},s_{-i}^{k})$. 
\end{defn}
Let $PNE^p\left(G_{f},s\right)$ be the set of $p$-plausible equilibria
with respect to $s$. Next, we adapt the three variants of our solution concept to the setup of partial observability. 
\begin{defn}
    A pair $(f,s)$, where $f$ is a clustering profile and $s$ is a
strategy profile satisfying $f\in NE (G_f)$, is a 
\begin{enumerate}
    \item \emph {weak $p$-CUE} if for each player $i$ and each
            clustering $f'_{i}$, there 
exists an equilibrium $s'\in NE(G^p_{(f'_i,f_{-i}),s_{-i}})$ 
such that $\pi_{i}\left(s'\right)\leq\pi_{i}\left(s\right)$. 
    \item  \emph {$p$-CUE} if $\pi_{i}\left(s'\right)\leq\pi_{i}\left(s\right)$ for each player $i$, each clustering $f'_{i}$, and each $p$-plausible equilibrium $s'\in PNE(G^p_{(f'_i,f_{-i}),s_{-i}}).$ 
    \item \emph {strong $p$-CUE} if $\pi_{i}\left(s'\right)\leq\pi_{i}\left(s\right)$ for each player $i$, each clustering $f'_{i}$, and each equilibrium $s'\in NE(G^p_{(f'_i,f_{-i}),s_{-i}}).$ 
\end{enumerate}
\end{defn}
\subsection{Adapted Results}
We begin by adapting Proposition \ref{prop:strong-necce} to the current setup.
\begin{prop}
Fix $p\in[0,1]$. Let $\left(f,s\right)$ be a strong $p$-CUE,
and let $s^{NE}$ be a Nash equilibrium. Then $\pi_{i}\left(s\right)\geq p\cdot \pi_{i}\left(s^{NE}\right)+(1-p)\cdot \pi_{i}\left(s_i,s^{NE}_{-i}\right))$
for all players $i$.
\end{prop}
\begin{proof} 
The proof is immediate by considering a deviation of player $i$ to
clustering all payoffs together (i.e., $f_i^{\mathbb{R}}$) and the fact
that $s^{NE}\in NE(G^p_{(f'_i,f_{-i}),s_{-i}})$. 
\end{proof}

It is easy to see that Proposition \ref{fact2:any-NE-is-NE-clustered-1}
can be extended to our setup.  
\begin{prop}
Let $s^{NE}$ be a Nash equilibrium
of the underlying game. Then $\left(f,s^{NE}\right)$ is a $p$-CUE for every clustering profile $f$ and every $p\in[0,1]$.
\end{prop}
\begin{proof}
The proposition holds because $s^{NE}\in PNE^p\left(G_{\left(f'_{i},f_{-i}\right)},s\right)$
for every player $i$ and clustering $f'_{i}$ with respect to the
constant improvement path
$\left(s^{k}\right)_{k\geq0}$=$\left(s^{NE},s^{NE},...\right)$.  
\end{proof}
Next we show that the ``folk-theorem'' result for weak $p$-CUE holds
for $p$ sufficiently high (extending Proposition
\ref{prop-falk-thm}). 
\begin{prop}
Let $s$ be a strategy profile.
\begin{enumerate}
\item If $s$ is a weak $p$-CUE outcome for $p \in [0,1]$, then $s$
  is individually rational.  
  \item If $s$ is strongly individually rational, then for some
  $\overline p<1$,  $s$ is a weak CUE outcome for all $p \geq
  \overline p.$ 
\end{enumerate}
\end{prop}
The proof, which is analogous to the proof of Proposition \ref{prop-falk-thm}, is omitted for brevity.

{
In contrast to the ``folk-theorem'' result above, which requires the probability $p$ to be sufficiently high, if $p$ is sufficiently low, 
then only Nash equilibria can be $p$-CUE outcomes. That is, observability is necessary for coarse 
utility to introduce non-Nash behavior.}
\begin{prop} \label{prop-no-observability-only-Nash}{
Let $s$ be a strategy profile.}
    \begin{enumerate}
        \item {If $s$ is a weak 0-CUE outcome, then $s$ is a Nash equilibrium of the underlying game.}
        \item {if $s$ is not a Nash equilibrium, then for some $\bar p>0$, $s$ is 
not a weak CUE outcome for all $p<\bar p$.}
    \end{enumerate}
\end{prop}
\begin{proof}
    {The simple proof adapts an insight of} \citet{ok2001evolution} {from the evolution of preferences to the current setup.}
    \begin{enumerate}
    
        \item 
        {Assume to the contrary that $(f,s)$ is a weak CUE and $s$ is not a Nash equilibrium of the underlying game. This implies that there exists a player $i$ and a strategy $s'_i$ such that $\pi_i(s_i',s_{-i})>\pi_i(s).$ 
        Observe that following a deviation of player $i$ to not clustering payoffs (i.e., to ${f'_{i}}^{\emptyset}$), player $-i$ must continue playing $s_{-i}$ in any equilibrium of the post-deviation game (due to not observing the opponent's deviation), which implies that player $i$ obtains payoff 
        $\pi_i(s_i',s_{-i})>\pi_i(s)$ in all equilibria 
       $s'\in NE(G^0_{({f'_i}^\emptyset,f_{-i}),s_{-i}}),$
        which contradicts $s$ being a weak CUE outcome.
        }
        \item 
        {
        Because $s$ is not a Nash equilibrium of the 
        underlying game, there exists a player $i$ who 
        can gain $\delta$ by deviating. Let $\bar p$ be any positive number that is smaller than $\delta$ divided by the difference between the maximal and the minimal feasible payoff of player $i.$ Observe that $s$ cannot be a 
       weak $p$-CUE outcome for $p<\bar p$ because player $i$, by deviating to not clustering payoffs (i.e., to ${f'_{i}}^{\emptyset}$), gains at least
        $\delta$ when her deviation is not observed, which outweighs her maximal feasible expected loss when her deviation is observed.}\qedhere
  
    \end{enumerate}
\end{proof}
It is easy to see that the Proposition \ref{claim-constant-sum}
(characterization of $p$-CUE outcomes in constant-sum games),
Proposition \ref{claim-common-interest} (characterization of $p$-CUE
outcomes in games with common interests), and Theorem
\ref{Thm:neccesary-conditions-monotone} hold for all $p\in [0,1]$
(where $p$-CUE replaces CUE in the statement of each result) with
minor adaptations to the proofs. 

We next observe that the necessary conditions for being a $p$-CUE
outcome in games with strategic complements are the same as in Theorem
\ref{thm-complements}. By contrast, these conditions are no longer
sufficient for being $p$-CUE outcomes. Specifically, one can show that
lower values of $p$ have smaller sets of CUE outcomes, which converge
towards the set of Nash equilibria as $p$ converges to zero. Formally,
the adaptation of the necessary conditions of Theorem
\ref{thm-complements} to the current setup is as follows (the proof,
which is analogous to the proof of Theorem \ref{thm-complements}, is
omitted for brevity): 

\begin{prop}
If $G$ is an interval game with monotone externalities and strategic
complements, $s$ is an interior 
strategy profile, $p\in [0,1]$, and $s$ is a $p$-CUE outcome, then
\begin{enumerate}
\item $s_{i}\succeq_{-i}BR_{i}\left(s_{-i}\right)$
for each player $i$;
\item if $s'_{i}\preceq_{-i}s_{i}$, $s'_{-i}\preceq_{i}s_{-i}$, and
  either (a) $\pi_{-i}\left(s'\right)\geq\pi_{-i}\left(s\right)$ or
  (b) $s'_{-i}\in BR_{-i}\left(s_{i}'\right)$, then
  $\pi_{i}\left(s'\right)\leq\pi_{i}\left(s\right)$. 
\end{enumerate}
\end{prop}
Finally, minor adaptations to the proof of Theorem \ref{thm-substitues} show that the necessary conditions for being a $p$-CUE outcome in games with strategic substitutes are the same as in Theorem \ref{thm-substitues} (where $p$-CUE outcome replaces CUE outcome). 
\fullv{
\bibliographystyle{chicago}
\bibliography{CUE}
}
\end{document}